\documentclass[journal,twoside,web,letter]{ieeecolor_noruler}
\usepackage{generic}
\usepackage{cite}
\usepackage{amsmath,amssymb,amsfonts}
\usepackage{algorithmic}
\usepackage{graphicx}
\usepackage{algorithm,algorithmic}
\usepackage{amssymb}
\usepackage{hyperref}
\hypersetup{hidelinks=true}
\usepackage{textcomp}
\usepackage{wrapfig}

\newtheorem{corol}{Corollary}

\newtheorem{lemma}{Lemma}
\newtheorem{remark}{Remark}
\newtheorem{thm}{Theorem}

\DeclareMathOperator{\rank}{rank}

\DeclareMathOperator{\row}{row}

\def\BibTeX{{\rm B\kern-.05em{\sc i\kern-.025em b}\kern-.08em
		T\kern-.1667em\lower.7ex\hbox{E}\kern-.125emX}}
\begin{document}
		\def\ZZ{{\mathbb Z}}
		\def\RR{{\Bbb R}}
		\def\NN{{\mathbb N}}
		\def\CC{{\mathbb C}}
		
	\title{Existence and Design of Functional Observers for Time-Delay Systems with Delayed Output Measurements}
	\author{Hieu Trinh, Phan Thanh Nam and Tyrone Fernando
				\thanks{H.~Trinh is with the School of Engineering, Deakin University, Waurn Ponds, 75 Pigdons Road, Geelong, Australia. (email:  hieu.trinh@deakin.edu.au)}
				\thanks{P. T.~Nam is with the Department of Mathematics, Quy Nhon University, Vietnam. (email:  phanthanhnam@qnu.edu.vn)}
		\thanks{T.~Fernando is with the Department of Electrical, Electronic and Computer Engineering, University of Western Australia (UWA), 35 Stirling Highway, Crawley, WA 6009, Australia. (email:  tyrone.fernando@uwa.edu.au)}	
}
	
	\maketitle
\begin{abstract}
	This paper investigates the problem of functional state estimation for
	linear time-delay systems in which the delay affecting the state evolution
	differs from the delay affecting the output measurements. While existing
	observer designs typically assume instantaneous output availability,
	practical systems often exhibit measurement delays that are distinct
	from and not aligned with the intrinsic state delay. We explicitly
	distinguish between the state delay $\tau$ and the measurement delay $h$
	and address the problem of estimating a desired functional
	$z(t)=Fx(t)$ under such mismatched delay conditions.
	
	Three functional observer structures are proposed to accommodate
	different delay configurations, each capable of realizing functional
	observers of different orders. This flexibility is important since
	a functional observer whose order equals the number of estimated
	functionals may not always exist. For each structure, algebraic existence conditions are established together with
	constructive synthesis procedures. A functional augmentation framework
	is developed to derive verifiable rank-based conditions for observers
	of various orders. In addition, the notion of \emph{generalized
		functionals}, defined over an \emph{augmented delayed state vector}, is
	introduced to provide greater flexibility in satisfying observer
	existence conditions and facilitating systematic design. Numerical
	examples illustrate the proposed theory.
\end{abstract}
	
	\begin{IEEEkeywords}
		Functional observers, Time-delay Systems, Generalized Functionals, Extended Delay-State Space.
	\end{IEEEkeywords}
	
\section{Introduction}

Time-delay systems arise naturally in many engineering, biological,
economic, and physical processes in which the current rate of change
of the state depends not only on its present value but also on its
past history \cite{Richard2003,Niculescu2001,Hale1993,Michiels2007}.
Such delays typically originate from transport phenomena in chemical
processes, communication constraints in networked control systems,
transmission and coordination delays in power systems, regenerative
effects in mechanical systems such as machining processes, and
maturation effects in biological models.
Unlike classical finite-dimensional systems, time-delay systems are
inherently infinite-dimensional since their future evolution depends
on functions defined over an interval of past time.

The presence of delay fundamentally alters system properties such as
stability, controllability, observability, and stabilizability.
Classical quadratic Lyapunov functions are generally insufficient to
capture the behaviour of delayed systems, and stability analysis
typically requires Lyapunov--Krasovskii functionals or related
infinite-dimensional techniques.
Similarly, observer and controller design must explicitly account for
the delayed dynamics to ensure well-posedness and asymptotic
performance.
Stabilizing controllers for time-delay systems are frequently
formulated in a state-feedback form, where the control input depends
linearly on the system state.
In particular, many designs adopt a functional state-feedback
structure of the form
\(
u(t)=Fx(t),
\)
or, more generally,
\[
u(t)=z(t), \qquad z(t)=Fx(t).
\]
However, in many practical applications the full system state is
rarely directly measurable.
Limitations in sensing, communication constraints, distributed
architectures, and measurement delays often prevent direct access to
$x(t)$ and its delayed components.
Consequently, the implementation of stabilizing state-feedback
controllers requires the estimation of the control signal $z(t)$.
Functional observers provide a direct means of estimating the
desired functional $z(t)$ without reconstructing the entire
state vector.
This contrasts with state-observer-based schemes, where the full state
$x(t)$ must first be estimated before the functional $z(t)$ can be
computed.
Functional observers therefore offer a lower-dimensional estimation
mechanism that can simplify observer design and reduce computational
complexity.

The problem of functional estimation for time-delay systems has
received significant attention in the control literature \cite{dar2001}-\cite{Naami2021}.
To address the challenges introduced by delayed dynamics, various
functional observer structures and methodologies have been proposed.
Darouach in \cite{dar2001} established necessary and sufficient
conditions for the existence and design of functional observers for
continuous-time systems with delays in the state variables, where the
observer order is equal to the number of functionals to be estimated.
A systematic design procedure was also presented, and the stability of
the estimation error dynamics was analyzed using
Lyapunov--Krasovskii methods, leading to delay-dependent and
delay-independent stability conditions formulated as LMIs.
Subsequently, these results were extended to discrete-time systems in
\cite{dar2005}, where similar existence conditions and a parametric
observer design methodology were derived.
An alternative LMI-based sufficient condition, based on a different
delay-dependent functional observer structure, was later reported in
\cite{NamIJC2014}.
In \cite{Gu2022}, a parametric approach for designing functional
observers for linear time-varying systems with time delays was
presented, whereas in \cite{Naami2021} an $H_\infty$ functional
observer design method for nonlinear systems with multiple time
delays and disturbances was proposed. 
A notable feature of the above works is that the proposed functional
observer structures focus on the design of a specific-order observer,
typically equal to the number of functionals to be estimated, and
assume that instantaneous output measurements are available.
However, it is well known in functional observer theory that a
functional observer of this specific order may not always exist, in
which case the design methodology must allow for the construction of
higher-order observers \cite{ref10}.
Furthermore, in many practical situations the output measurements
themselves may be subject to delays, which further complicates the
observer design problem.
The present paper addresses this gap by developing a framework that
accommodates functional observers of different orders while explicitly
accounting for delays in the output measurements.

In this work, we explicitly distinguish between two types of delays.
The delay associated with the state evolution is denoted by $\tau$,
whereas the delay affecting the output measurements is denoted by $h$.
We address the problem of estimating a desired functional
$z(t)=Fx(t)$ for time-delay systems in which the state dynamics depend
on delayed states while the available output measurements are
themselves delayed.

Such constraints arise naturally in networked control systems, where
sensing, communication, and processing latencies prevent instantaneous
access to measured outputs for the purpose of functional state
reconstruction. The study of functional observers of different orders
under delayed output measurements is therefore of practical importance,
particularly in networked systems where functional observer techniques
have recently found increasing applications \cite{ref8n}--\cite{ref14n}.

Moreover, this paper highlights a fundamental structural distinction
between delay-free systems and time-delay systems. In the delay-free
case, the system state evolves in the $n$-dimensional space
$\mathbb{R}^n$, and functional observers are constructed as linear
mappings of the instantaneous state \cite{22new}. Even when augmentation
techniques are employed in the design of higher-order functional
observers (see \cite{ref10}), the resulting existence and synthesis
conditions remain confined to an $n$-dimensional state space and retain
a purely finite-dimensional algebraic character.

In contrast, time-delay systems exhibit an augmented state structure,
since their evolution depends not only on the instantaneous state
$x(t)$ but also on past state values. When $h \le \tau$, the plant
dynamics depend on the pair $\big(x(t),x(t-\tau)\big)$. When $h>\tau$,
the delayed state $x(t-h)$ does not appear in the plant dynamics,
but it may arise in the observer structure. Consequently, both
$x(t-\tau)$ and $x(t-h)$ may enter the observer dynamics.

Accordingly, the system can be represented in an augmented
finite-dimensional state space. If $h \le \tau$, the augmented state is
\[
\xi(t)=
\begin{pmatrix}
	x(t)\\
	x(t-\tau)
\end{pmatrix}
\in\mathbb{R}^{2n},
\]
whereas if $h>\tau$, it takes the form
\[
\xi(t)=
\begin{pmatrix}
	x(t)\\
	x(t-\tau)\\
	x(t-h)
\end{pmatrix}
\in\mathbb{R}^{3n}.
\]

This dimensional enlargement fundamentally distinguishes the delay
case from the delay-free setting. Consequently, functional observer
design must be formulated on this augmented delay-state space rather
than solely on the instantaneous state.

The introduction of \emph{generalized functionals} in this work is
motivated precisely by this structural enlargement. These functionals
act on both present and delayed state components, thereby providing
additional design flexibility and enabling the derivation of tractable
LMI-based existence conditions under delayed measurements.

\medskip
\noindent\textbf{Contributions:}
This paper introduces three functional observer structures and
establishes corresponding existence conditions together with
constructive synthesis procedures.
By employing a functional augmentation approach, verifiable algebraic
conditions are derived for observers of different orders.
Furthermore, the notion of \emph{generalized functionals} is
introduced by considering an augmented delayed state vector,
providing additional flexibility for satisfying functional observer
existence conditions and for systematic observer synthesis.

\medskip
\noindent\textbf{Organization:}
Section~II presents the system description and formulates the
problem.
Section~III introduces notation and reviews preliminary results used
throughout the paper.
Section~IV develops the existence conditions and observer synthesis
procedures for the case $h=\tau$, together with an illustrative numerical
example.
Section~V establishes the corresponding existence conditions and
synthesis procedure for the case $h>\tau$, also supported by a numerical
example.
Conclusions are drawn in Section~VI.
The stabilizability conditions for time-delay systems are formulated
as LMIs, and their derivation is provided in the Appendix.

\section{System Description and Problem Statement}\label{A}

Consider the linear time-delay system
\begin{IEEEeqnarray}{rcl}
	\label{1a}
	\dot{x}(t) &=& Ax(t) + A_\tau x(t-\tau) + Bu(t) \nonumber \\
	\label{1b}
	x(t) &=& \phi(t), \quad t\in[-h,0] \nonumber 
\end{IEEEeqnarray}
where $x(t)\in\mathbb{R}^n$ is the state vector, $u(t)\in\mathbb{R}^m$ is the control input vector,
$\tau>0$ is a constant state delay, and $h\ge\tau$ denotes the maximum delay appearing
in the available measurements. The initial function
$\phi:[-h,0]\to\mathbb{R}^n$ is given. The matrices
$A,A_\tau \in\mathbb{R}^{n\times n}$ and $B\in\mathbb{R}^{n\times m}$ are constant.

Due to sensing or communication constraints, the measured output vector is available
with delay and is modeled as
\begin{align}
	\label{2}
	y(t) = C_\tau x(t-\tau) + C_h x(t-h) \nonumber 
\end{align}
where $\tau>0$ and $h>0$ are constant time delays, $y(t)\in\mathbb{R}^{p}$, and
$C_\tau, C_h \in\mathbb{R}^{p\times n}$ are constant matrices.

However, if $0<h<\tau$, redefining the measured output as
$\tilde y(t)=y(t-\tau+h)$ yields an equivalent formulation in which
the smallest delay equals $\tau$.
In particular, when $h=\tau$, the output expression for $y(t)$ reduces to
\[
y(t) = Cx(t-\tau)
\]
where
\begin{equation}
C := C_\tau + C_h. \nonumber 
\end{equation}

The functional of interest is defined as
\begin{equation}\label{functional}
	z(t)=Fx(t) \nonumber 
\end{equation}
where $F\in\mathbb{R}^{r\times n}$ is full row rank.
The full row rank assumption ensures that the rows of $F$ are linearly independent,
so that the components of $z(t)$ represent independent linear functionals of the state.
This entails no loss of generality, since any linearly dependent rows can be removed
without altering the functional subspace spanned by $F$.

\medskip
We propose the following functional observer structures for the estimation of the desired functional.

\medskip
\noindent{\it Functional Observer Structure-A}

Structure~A is a delay-free internal dynamic observer driven by the delayed output measurements and input signals.
\begin{IEEEeqnarray}{rcl}
	\label{FOA1}
	\hat{z}(t)&=& w(t) + My(t) \nonumber \\
	\label{FOA2}
	\dot{w}(t) &=& Nw(t) + Gy(t) + G_\tau y(t-\tau)
	+ Ju(t)+ J_\tau u(t-\tau) \nonumber 
\end{IEEEeqnarray}

\noindent with initial condition
$w(0)$,
where $w(t)\in\mathbb{R}^r$ and $\hat{z}(t)$ denotes the estimate of $z(t)$.
The matrices $M$, $N$, $G$, $G_\tau$, $J$, and $J_\tau$ are to be determined such that
$\hat{z}(t)\to z(t)$ asymptotically as $t\to\infty$.

\medskip

\noindent{\it Functional Observer Structure-B}

Structure-B extends Structure-A by incorporating internal delay dynamics in the observer state.
\begin{IEEEeqnarray}{rcl}
	\label{FOB1}
	\hat{z}(t) &=& w(t) + My(t) \nonumber \\
	\label{FOB2}
	\dot{w}(t) &=& Nw(t) + N_\tau w(t-\tau)
	+ Gy(t) + G_\tau y(t-\tau) 
	+ Ju(t) \nonumber\\
	&&+ J_\tau u(t-\tau), \nonumber
\end{IEEEeqnarray}
with $w(\theta)=\rho(\theta)$  for $\theta\in[-\tau,0]$ where $w(t)\in\mathbb{R}^r$.
The matrices $M$, $N$, $N_\tau$, $G$, $G_\tau$, $J$, and $J_\tau$ are to be determined so that
$\hat{z}(t)\to z(t)$ asymptotically.

\medskip
\noindent{\it Functional Observer Structure-C}

Structure-C will be introduced subsequently as a further generalization,
incorporating additional internal delay channels to enhance design flexibility.
\begin{IEEEeqnarray}{rcl}
	\label{FOC1}
	\hat{z}(t)&=&w(t)+My(t)+M_\tau y(t-\tau)+M_hy(t-h) \nonumber \\
	\label{FOC2}
	\dot{w}(t) &=& Nw(t) + N_\tau w(t-\tau) + N_h w(t-h)
	+ Gy(t) \nonumber\\ 
	&&+ G_\tau y(t-\tau) +  G_h y(t-h) + G_{\tau\tau} y(t-2\tau) \nonumber\\
	&&+G_{\tau h}y(t-\tau -h)+G_{hh}y(t-2h)+ Ju(t) \nonumber\\ &&+ J_\tau u(t-\tau)+ J_h u(t-h)+J_{\tau\tau}u(t-2\tau) \nonumber\\ && + J_{\tau h} u(t-\tau-h)+J_{hh}u(t-2h) \nonumber 
\end{IEEEeqnarray}
with $w(\theta)=\rho(\theta)$ for $\theta\in[-h,0]$ where $w(t)\in\mathbb{R}^r$.
The matrices $M$, $M_\tau$, $M_h$, $N$, $N_\tau$, $N_h$, $G$, $G_\tau$, $G_h$, $G_{\tau\tau}$, $G_{\tau h}$, $G_{hh}$,  $J$, $J_\tau$, $J_h$, $J_{\tau\tau}$, $J_{\tau h}$ and $J_{hh}$ are to be determined so that
$\hat{z}(t)\to z(t)$ asymptotically.
\medskip

Each observer structure possesses distinct advantages and limitations in the
estimation of the desired functional, which will be analyzed in the sequel.

\section{Notation and Preliminaries}

For a matrix $G$, $G^{\mathsf T}$ denotes its transpose,
$G^{-}$ denotes a generalized inverse of $G$ satisfying
$GG^{-}G=G$, $\rank(G)$ denotes the rank of $G$, and
$\mathrm{sym}(\cdot)$ denotes the operator
\[
\mathrm{sym}(G):=G+G^{\mathsf T}.
\]
The symbols $\mathbb R$ and $\mathbb C$ denote the sets of real and complex numbers.
For $\lambda\in\mathbb C$, $\Re(\lambda)$ denotes its real part.

A matrix $G \in\mathbb{R}^{n\times n}$ is called positive definite 
($G>0$) if $G=G^{\mathsf T}$ and $x^{\mathsf T}Gx>0$ for all 
$x\neq \mathbf 0$.

For a square matrix $G$, $\sigma(G)$ denotes the set of all eigenvalues of $G$, and
\[
\sigma_{\min}(G)
=
\min\{\Re(\lambda):\lambda\in\sigma(G)\}.
\]

The identity matrix is denoted by $I$ when its dimension is clear 
from the context; otherwise it is specified using a subscript. 
The symbol $\mathbf 0$ denotes a zero matrix or zero vector; its 
dimension is inferred from the context and specified by a 
subscript only when necessary. The scalar zero is denoted by $0$.

For $G\in\mathbb{R}^{k\times n}$, we denote by 
$\operatorname{row}(G)\subseteq\mathbb{R}^n$ the subspace spanned 
by the rows of $G$. The direct sum of vector subspaces is denoted 
by $\oplus$.

The following lemmas will be used in the sequel.

\medskip

\begin{lemma}[see \cite{25}]\label{lem1}
	Let $\Theta\in\mathbb{R}^{k\times n}$ and $\Upsilon\in\mathbb{R}^{r\times n}$
	be given matrices, and consider the matrix equation
	\begin{equation}
		\label{9a}
		X\Theta = \Upsilon \nonumber 
	\end{equation}
	where $X\in\mathbb{R}^{r\times k}$ is unknown.
	
	A solution $X$ exists if and only if
	\begin{equation}
		\label{10}
		\rank
		\begin{pmatrix}
			\Upsilon \\ \Theta
		\end{pmatrix}
		=
		\rank(\Theta). \nonumber 
	\end{equation}
	In this case, the general solution is given by
	\begin{equation}
		\label{11ty}
		X = \Upsilon \Theta^{-}
		+ Z\big(I_k - \Theta \Theta^{-}\big) \nonumber 
	\end{equation}
	where 
	$Z\in\mathbb{R}^{r\times k}$ is an arbitrary matrix.
\end{lemma}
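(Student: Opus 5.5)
The plan is to use the defining property $\Theta\Theta^{-}\Theta=\Theta$ of the generalized inverse and to reduce solvability of $X\Theta=\Upsilon$ to an inclusion of row spaces, $\operatorname{row}(\Upsilon)\subseteq\operatorname{row}(\Theta)$.

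\emph{Necessity.} Suppose a solution $X$ exists. Then every row of $\Upsilon=X\Theta$ is a linear combination of the rows of $\Theta$, so $\operatorname{row}(\Upsilon)\subseteq\operatorname{row}(\Theta)$. Stacking $\Upsilon$ on top of $\Theta$ therefore does not enlarge the row space, which gives $\rank\begin{pmatrix}\Upsilon\\ \Theta\end{pmatrix}=\rank(\Theta)$.

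\emph{Sufficiency.} Conversely, we always have $\operatorname{row}(\Theta)\subseteq\operatorname{row}\begin{pmatrix}\Upsilon\\ \Theta\end{pmatrix}$. If the two ranks agree, the dimensions match, so the two spaces coincide and $\operatorname{row}(\Upsilon)\subseteq\operatorname{row}(\Theta)$. Hence $\Upsilon=W\Theta$ for some $W\in\mathbb R^{r\times k}$. Then
\[
\Upsilon\Theta^{-}\Theta=W\Theta\Theta^{-}\Theta=W\Theta=\Upsilon,
\]
so $X_0=\Upsilon\Theta^{-}$ is an explicit solution.

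\emph{General solution.} First we check that every matrix of the stated form solves the equation. Since
\[
Z(I_k-\Theta\Theta^{-})\Theta=Z(\Theta-\Theta\Theta^{-}\Theta)=\mathbf 0,
\]
each $X=\Upsilon\Theta^{-}+Z(I_k-\Theta\Theta^{-})$ satisfies $X\Theta=X_0\Theta=\Upsilon$.

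Next we show every solution has this form. Let $X$ be any solution. Then $X\Theta\Theta^{-}=\Upsilon\Theta^{-}$, and therefore
\[
X=\Upsilon\Theta^{-}+X-X\Theta\Theta^{-}=\Upsilon\Theta^{-}+X(I_k-\Theta\Theta^{-}).
\]
This is the stated form with the choice $Z=X$.

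The only step needing care is the sufficiency direction, namely converting the rank equality into the factorization $\Upsilon=W\Theta$ through the dimension argument. Once that factorization is in hand, the generalized-inverse identity finishes everything else mechanically.
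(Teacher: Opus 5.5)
Your proof is correct and complete. The row-space argument settles solvability, $\Upsilon\Theta^{-}$ is a particular solution, and the identity $X=\Upsilon\Theta^{-}+X(I_k-\Theta\Theta^{-})$ shows that every solution has the stated form. The paper itself gives no proof of this lemma, since it is quoted from the literature, and your argument is the standard one for this classical result.
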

\medskip

\begin{remark}
	For dynamical systems without delays, the use of the general
	solution $X$ in Lemma~\ref{lem1} to design functional observers
	whose order equals the number of functionals to be estimated
	was first reported in \cite{22new}.
\end{remark}

\medskip

\begin{lemma}[see~\cite{22new}, \cite{refty1}]\label{lem1c}
	Let $\Psi$ be a matrix of full row rank and $\Lambda$ and $\Phi$
	be matrices with the same number of columns as $\Psi$ such that
	\[
	\rank\!\begin{pmatrix}\Lambda\\ \Psi\\ \Phi\end{pmatrix}
	=
	\rank\!\begin{pmatrix}\Psi\\ \Phi\end{pmatrix}.
	\]
	Define
	\[
	\ell := \rank(\Psi), \qquad 
	\mathcal I_1 := \begin{pmatrix} I_\ell\\ \bf0 \end{pmatrix},
	\]
	\[	N_1 := \Lambda
	\begin{pmatrix}\Psi\\ \Phi\end{pmatrix}^{-}\mathcal I_1,
	\qquad
	N_2 := \Big(I-
	\begin{pmatrix}\Psi\\ \Phi\end{pmatrix}
	\begin{pmatrix}\Psi\\ \Phi\end{pmatrix}^{-}\Big)\mathcal I_1 .
	\]
	Then, for every $\lambda\in\mathbb{C}$ with $\Re(\lambda)\ge 0$,
	\[
	\rank\!\begin{pmatrix}
		\lambda \Psi-\Lambda\\[1mm]
		\Phi
	\end{pmatrix}
	=
	\rank\!\begin{pmatrix}\Psi\\ \Phi\end{pmatrix}
	\]
	if and only if the pair $(N_1,N_2)$ is detectable, i.e.,
	\[
	\rank\!\begin{pmatrix}
		\lambda I - N_1\\
		N_2
	\end{pmatrix}
	= \ell
	\qquad \forall\,\lambda\in\mathbb{C},\ \Re(\lambda)\ge 0 .
	\]
\end{lemma}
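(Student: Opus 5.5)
Let $\Sigma:=\begin{pmatrix}\Psi\\ \Phi\end{pmatrix}$ and $\Sigma^{-}$ a generalized inverse, so $\Sigma\Sigma^{-}\Sigma=\Sigma$. The plan is to rewrite the rank condition on $\begin{pmatrix}\lambda\Psi-\Lambda\\ \Phi\end{pmatrix}$ as a rank condition on a matrix built from $N_1$ and $N_2$, using only row operations and Lemma~\ref{lem1}.

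\textbf{Step 1: express $\Lambda$ through $\Sigma$.} By the hypothesis $\rank\begin{pmatrix}\Lambda\\ \Sigma\end{pmatrix}=\rank(\Sigma)$ and Lemma~\ref{lem1}, there is $X$ with $X\Sigma=\Lambda$, namely $X=\Lambda\Sigma^{-}+Z(I-\Sigma\Sigma^{-})$. Since $\Lambda\Sigma^{-}\Sigma=X\Sigma\Sigma^{-}\Sigma=X\Sigma=\Lambda$, we may take $X=\Lambda\Sigma^{-}$. Write $\mathcal I_1=\begin{pmatrix}I_\ell\\ \mathbf 0\end{pmatrix}$ and $\mathcal I_2=\begin{pmatrix}\mathbf 0\\ I\end{pmatrix}$, so that $\Psi=\mathcal I_1^{\mathsf T}\Sigma$ and $\Phi=\mathcal I_2^{\mathsf T}\Sigma$. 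Then
\[
\begin{pmatrix}\lambda\Psi-\Lambda\\ \Phi\end{pmatrix}
=\begin{pmatrix}\lambda\mathcal I_1^{\mathsf T}-\Lambda\Sigma^{-}\\ \mathcal I_2^{\mathsf T}\end{pmatrix}\Sigma .
\]

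\textbf{Step 2: eliminate the $\Phi$-block.} Write $\Lambda\Sigma^{-}=(N_1\ \ K)$ with $K=\Lambda\Sigma^{-}\mathcal I_2$. The term $K\mathcal I_2^{\mathsf T}\Sigma=K\Phi$ can be removed from the top block by a row operation using the $\Phi$ rows, which does not change the rank. Hence the rank equals the rank of $\begin{pmatrix}(\lambda I-N_1)\Psi\\ \Phi\end{pmatrix}$.

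\textbf{Step 3: pass to the column space of $\Sigma$.} Right-multiplying by $\Sigma^{-}$ and then by $\Sigma$ shows that multiplying on the right by $\Sigma$ is injective on row vectors in the range of $\Sigma\Sigma^{-}$, so the rank is preserved. Because $\Psi$ has full row rank $\ell$, the inequality $\le\rank\Sigma$ is automatic. Equality therefore means that no nonzero $(u,v)$ satisfies $u(\lambda I-N_1)\Psi+v\Phi=0$ beyond those already forced by dependencies in $\Sigma$. Any dependency in $\Sigma$ corresponds to a nonzero row vector $(a,b)$ with $(a,b)\Sigma=0$, equivalently $(a,b)$ in the left null space of $\Sigma$, i.e.\ the row space of $I-\Sigma\Sigma^{-}$. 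The pair $(u(\lambda I-N_1),v)$ must lie in that left null space. Counting dimensions, the rank condition holds iff $u(\lambda I-N_1)=a$ with $(a,b)\in\row(I-\Sigma\Sigma^{-})$ forces $u=0$. This is the obstacle.

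\textbf{Step 4: reduce to $N_2$.} Restricting the left null space $\row(I-\Sigma\Sigma^{-})$ to its first-block coordinates gives $\{w\,(I-\Sigma\Sigma^{-})\mathcal I_1\}=\row(N_2)$. So the condition becomes: $u(\lambda I-N_1)\in\row(N_2)$ implies $u=0$. I expect this to need care: one must show the $b$-component can always be matched, since $v$ is free, and that $u$ itself is not constrained by membership in $\row(N_2)$. The latter uses full row rank of $\Psi$, which ensures no nonzero $(a,0)$ annihilates $\Sigma$ except through $\Phi$. This is the step I would verify most carefully.

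\textbf{Step 5: conclude.} The condition in Step 4 is exactly that the columns of $\begin{pmatrix}\lambda I-N_1\\ N_2\end{pmatrix}$ have no nontrivial left-kernel interplay. Equivalently, the map $u\mapsto u(\lambda I-N_1)$ modulo $\row(N_2)$ is injective, i.e.\ $\rank\begin{pmatrix}\lambda I-N_1\\ N_2\end{pmatrix}=\ell$ after transposing the duality (the PBH test on the transpose pair, with observability-type rank). Both directions hold pointwise in $\lambda$, so restricting to $\Re(\lambda)\ge0$ gives the equivalence with detectability of $(N_1,N_2)$.
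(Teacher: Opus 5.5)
The paper gives no proof of this lemma; it quotes it from its cited references, so I am judging your argument on its own. Your Steps 1 and 2 are correct. Set $\Sigma:=\begin{pmatrix}\Psi\\ \Phi\end{pmatrix}$ and $T_\lambda:=\begin{pmatrix}\lambda I-N_1 & \mathbf 0\\ \mathbf 0 & I\end{pmatrix}$. Those steps show $\rank\begin{pmatrix}\lambda\Psi-\Lambda\\ \Phi\end{pmatrix}=\rank(T_\lambda\Sigma)$.

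\textbf{The gap is in Steps 3--5.} The condition you reach there is ``$u(\lambda I-N_1)\in\row(N_2)$ implies $u=0$''. It is not equivalent to $\rank(T_\lambda\Sigma)=\rank(\Sigma)$.
\begin{itemize}
\item Taking $u$ to be a left eigenvector of $N_1$ shows it fails for every $\lambda\in\sigma(N_1)$.
\item When $\lambda I-N_1$ is invertible, it fails as soon as $N_2\neq\mathbf 0$: take $u=r(\lambda I-N_1)^{-1}$ with $0\neq r\in\row(N_2)$.
\end{itemize}
So your condition just says ``$\lambda I-N_1$ is invertible and $N_2=\mathbf 0$''.

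A concrete counterexample: take $\Psi=\Phi=\Lambda=(1)$ and $\Sigma^{-}=(\tfrac12\ \ \tfrac12)$. Then $N_1=\tfrac12$ and $N_2=(\tfrac12\ \ -\tfrac12)^{\mathsf T}$. Both rank conditions of the lemma hold for every $\lambda$, yet your condition fails at $\lambda=\tfrac12$.

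The error is in the dimension count. $\rank(T_\lambda\Sigma)=\rank(\Sigma)$ holds iff the left null space of $\Sigma$, together with the row space of $T_\lambda$, spans all row vectors. Projecting onto the first block, this means $\row(\lambda I-N_1)+\row(N_2)=\mathbb C^{1\times\ell}$. That is \emph{surjectivity} of $u\mapsto u(\lambda I-N_1)$ modulo $\row(N_2)$, not injectivity. Injectivity is impossible unless $N_2=\mathbf 0$, because the quotient has dimension $\ell-\rank(N_2)$. ``Transposing the duality'' in Step 5 does not convert one condition into the other. Separately, the injectivity remark at the start of Step 3 does not apply, since the rows of $T_\lambda$ need not lie in the row space of $\Sigma\Sigma^{-}$.

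\textbf{A clean repair is to count on column vectors.} First, $\rank(T_\lambda\Sigma)=\rank(\Sigma)-\dim\big(\mathrm{col}(\Sigma)\cap\ker T_\lambda\big)$. Next, $\Sigma\Sigma^{-}$ is idempotent with range $\mathrm{col}(\Sigma)$, so $\mathrm{col}(\Sigma)=\ker(I-\Sigma\Sigma^{-})$. Also, $\ker T_\lambda$ consists of the vectors $\begin{pmatrix}x\\ \mathbf 0\end{pmatrix}$ with $(\lambda I-N_1)x=\mathbf 0$. Hence such a vector lies in the intersection iff $(\lambda I-N_1)x=\mathbf 0$ and $N_2x=\mathbf 0$. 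This gives, for every $\lambda\in\mathbb C$,
\[
\rank\begin{pmatrix}\lambda\Psi-\Lambda\\ \Phi\end{pmatrix}=\rank(\Sigma)-\ell+\rank\begin{pmatrix}\lambda I-N_1\\ N_2\end{pmatrix}.
\]
So the equivalence holds pointwise in $\lambda$, and in particular on $\Re(\lambda)\ge 0$. Full row rank of $\Psi$ is used only so that $\mathcal I_1$ selects exactly the $\Psi$-block.
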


\medskip
\begin{remark}
	Lemma 2 of \cite{22new} and also Lemma 9 of \cite{refty1} is exactly Lemma~\ref{lem1c} above where in \cite{22new}  and \cite{refty1}, $\Lambda, \Psi$ and $\Phi$ are chosen as $\Lambda= LA, \Psi = L$ and $\Phi = \begin{pmatrix}
		CA\\C
	\end{pmatrix}$.
\end{remark}

\medskip

\begin{lemma}\label{lem2}
		Consider the time-delay system
		\begin{IEEEeqnarray}{rcl} \label{A4ty}
			\dot e(t) &=& (N_1+KN_2)e(t)
			+ (N_{\tau_1}+KN_{\tau_2})e(t-\tau)
			\nonumber\\
			&&+ (N_{h_1}+KN_{h_2})e(t-h)
		\end{IEEEeqnarray}
		where $h>\tau>0$ are known constant delays,
		$N_1,N_{\tau_1},N_{h_1}\in\mathbb R^{n\times n}$,
		$N_2,N_{\tau_2},N_{h_2}\in\mathbb R^{m\times n}$ are known matrices,
		and $K\in\mathbb R^{n\times m}$ is to be designed.
	
	Suppose there exist a $3n\times 3n$-matrix $P>0$, four  $n\times n$-matrices 
	$Q_1> 0, Q_2> 0, R_1> 0, R_2> 0$, 
	a non-singular $n\times n $-matrix $M$, a $n\times m$-matrix $G$ and a scalar $\lambda$ such that the following LMI holds
	\begin{align}\label{eq:LMI}
		\Theta= &\mathrm{sym}(\Pi_1P\Pi_2^{\mathsf T}) +v_1Q_1v_1^{\mathsf T} -v_2(Q_1-Q_2)v_2^{\mathsf T} \nonumber\\
		&-v_3Q_2v_3^{\mathsf T}+\tau v_6 R_1v_6^{\mathsf T}+(h-\tau) v_6R_2v_6^{\mathsf T}\nonumber\\
		&-\frac{1}{\tau}\Big((v_1-v_2)R_1(v_1-v_2)^{\mathsf T}
		+3\rho_1R_1\rho_1^{\mathsf T}\Big)\nonumber\\
		&-\frac{1}{h-\tau}\Big((v_2-v_3)R_2(v_2-v_3)^{\mathsf T}
		+3\rho_2R_2\rho_2^{\mathsf T}\Big)\nonumber\\
		&+\mathrm{sym}((\lambda v_1+ v_6)M\mathcal{N}_1^{\mathsf T})+\mathrm{sym}((\lambda v_1+ v_6)G\mathcal{N}_2^{\mathsf T})\nonumber\\
		< &\ 0,
	\end{align}
	where
	\begin{align}
		&v_i:=\big(\textbf{0}_{n\times (i-1)n}\ \ I_n \ \ \textbf{0}_{n\times (6-i)n} \big)^{\mathsf T},\ i=1,2,\cdots, 6, \nonumber\\
		&\Pi_1:= \big(v_1\ \  \tau v_4 \ \ (h-\tau) v_5\big), \quad \Pi_2:=\big(v_6 \  v_1-v_2\  v_2-v_3\big), \nonumber\\		
		&\rho_1:=v_1+v_2-2v_4,  \quad \rho_2:= v_2+v_3-2v_5, \nonumber\\ 
		&\mathcal{N}_1^{\mathsf T}:=\big(N_1\quad N_{\tau_1}\quad N_{h_1}\quad \textbf{0}_{n\times 2n}\quad -I_n\big) \in \mathbb{R}^{n\times 6n},\nonumber\\
		&   \mathcal{N}_2^{\mathsf T}:=\big(N_2\quad N_{\tau_2}\quad N_{h_2}\quad \textbf{0}_{m\times 3n}\big)\in \mathbb{R}^{m\times 6n}.
		\nonumber
	\end{align}
	
	Then the system \eqref{A4ty} is asymptotically stable. 
	Moreover, a feasible choice of $K$ is given by
	\begin{equation}\label{eq:ZfromG}
		K=M^{-1}G. \nonumber 
	\end{equation}
\end{lemma}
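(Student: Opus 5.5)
We prove this with a Lyapunov--Krasovskii functional built around the augmented vector
\[
\zeta(t):=\begin{pmatrix} e(t)^{\mathsf T} & e(t-\tau)^{\mathsf T} & e(t-h)^{\mathsf T} & \bar e_1(t)^{\mathsf T} & \bar e_2(t)^{\mathsf T} & \dot e(t)^{\mathsf T}\end{pmatrix}^{\mathsf T}\in\mathbb R^{6n},
\]
where $\bar e_1(t)=\frac1\tau\int_{t-\tau}^{t}e(s)\,ds$ and $\bar e_2(t)=\frac1{h-\tau}\int_{t-h}^{t-\tau}e(s)\,ds$. With this vector, $v_i^{\mathsf T}\zeta$ extracts the $i$-th block. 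We then have $\Pi_1^{\mathsf T}\zeta=\eta(t)$ and $\Pi_2^{\mathsf T}\zeta=\dot\eta(t)$, where
\[
\eta(t):=\begin{pmatrix} e(t)\\ \int_{t-\tau}^t e\\ \int_{t-h}^{t-\tau} e\end{pmatrix}.
\]
The candidate functional is
\[
V=\eta^{\mathsf T}P\eta+\int_{t-\tau}^t e^{\mathsf T}Q_1e+\int_{t-h}^{t-\tau}e^{\mathsf T}Q_2e+\int_{-\tau}^0\!\int_{t+\theta}^t\dot e^{\mathsf T}R_1\dot e+\int_{-h}^{-\tau}\!\int_{t+\theta}^t\dot e^{\mathsf T}R_2\dot e .
\]

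First we differentiate each term. The $P$-term gives $\zeta^{\mathsf T}\mathrm{sym}(\Pi_1P\Pi_2^{\mathsf T})\zeta$. The $Q$-terms give $\zeta^{\mathsf T}\big(v_1Q_1v_1^{\mathsf T}-v_2(Q_1-Q_2)v_2^{\mathsf T}-v_3Q_2v_3^{\mathsf T}\big)\zeta$. The double integrals give
\[
\tau\,\dot e^{\mathsf T}R_1\dot e+(h-\tau)\,\dot e^{\mathsf T}R_2\dot e-\int_{t-\tau}^t\dot e^{\mathsf T}R_1\dot e-\int_{t-h}^{t-\tau}\dot e^{\mathsf T}R_2\dot e .
\]

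Next we bound the two negative integral terms with the Wirtinger-based integral inequality
\[
\int_a^b\dot e^{\mathsf T}R\dot e\ \ge\ \frac1{b-a}\Big[(e(b)-e(a))^{\mathsf T}R(e(b)-e(a))+3\,\Omega^{\mathsf T}R\,\Omega\Big],\qquad \Omega=e(b)+e(a)-\frac{2}{b-a}\int_a^b e .
\]
On $[t-\tau,t]$ this gives $\Omega=\rho_1^{\mathsf T}\zeta$, and on $[t-h,t-\tau]$ it gives $\Omega=\rho_2^{\mathsf T}\zeta$. This produces exactly the $-\frac1\tau(\cdots)$ and $-\frac1{h-\tau}(\cdots)$ terms of \eqref{eq:LMI}.

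Then we add the zero term from the descriptor method:
\[
2\big(\lambda e(t)+\dot e(t)\big)^{\mathsf T}M\big[(N_1+KN_2)e+(N_{\tau_1}+KN_{\tau_2})e(t-\tau)+(N_{h_1}+KN_{h_2})e(t-h)-\dot e\big]=0 .
\]
We write this as $\zeta^{\mathsf T}\mathrm{sym}\big((\lambda v_1+v_6)(M\mathcal N_1^{\mathsf T}+MK\mathcal N_2^{\mathsf T})\big)\zeta$. Setting $G=MK$, which is legitimate because $M$ is nonsingular and hence $K=M^{-1}G$, turns this term into the last two terms of $\Theta$. The key observation is that $\mathcal N_1^{\mathsf T}\zeta+K\mathcal N_2^{\mathsf T}\zeta$ equals the right-hand side of \eqref{A4ty} minus $\dot e$, which vanishes along trajectories. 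The $-I_n$ block of $\mathcal N_1$ is what subtracts $\dot e$.

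Collecting everything gives $\dot V\le\zeta^{\mathsf T}\Theta\zeta$. Since $\Theta<0$, we get $\dot V\le-\epsilon\|\zeta\|^2\le-\epsilon\|e(t)\|^2$. Because $V$ is bounded below by $\epsilon_0\|e(t)\|^2$ (from $P>0$) and bounded above in terms of the $C^1$-norm of the initial history, the Lyapunov--Krasovskii theorem gives asymptotic stability.

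The main obstacle is bookkeeping. We must check that $\Pi_2^{\mathsf T}\zeta$ really equals $\dot\eta=(\dot e,\ e(t)-e(t-\tau),\ e(t-\tau)-e(t-h))$, which holds because $\frac{d}{dt}\int_{t-\tau}^t e=e(t)-e(t-\tau)$. We must also confirm that the scalings $\tau v_4$ and $(h-\tau)v_5$ in $\Pi_1$ match the normalized averages $\bar e_1,\bar e_2$. The Wirtinger inequality is a standard result that we take as known. Finally, $\lambda$ is a fixed scalar chosen beforehand, so the condition is linear in $(P,Q_i,R_i,M,G)$.
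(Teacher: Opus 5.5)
Your proposal is correct and follows the paper's proof essentially step by step. It uses the same Lyapunov--Krasovskii functional built on $(e,\int_{t-\tau}^t e,\int_{t-h}^{t-\tau}e)$, the same $6n$-dimensional augmented vector, the Wirtinger-based inequality on both integral terms, and the descriptor zero-term with the substitution $G=MK$, which gives $\dot V\le\zeta^{\mathsf T}\Theta\zeta<0$; your explicit checks of $\Pi_1^{\mathsf T}\zeta$ and $\Pi_2^{\mathsf T}\zeta$ and the bound $\dot V\le-\epsilon\|e(t)\|^2$ just spell out details the paper leaves implicit.
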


\begin{proof}
	See Appendix.	
\end{proof}	

\bigskip

If the $h$-delay channel is absent, the error system reduces to a
single-delay system. In this case, the corresponding reduced
stability condition is given in the following lemma.

\medskip

\begin{lemma}\label{lem3}
	Consider the time-delay system
	\begin{IEEEeqnarray}{rcl}
		\dot e(t) &=& (N_1+KN_2)e(t)
		+ (N_{\tau_1}+KN_{\tau_2})e(t-\tau)
		\label{A4}
	\end{IEEEeqnarray}
	where $\tau>0$ is a known constant delay,
	$N_1,N_{\tau_1}\in\mathbb R^{n\times n}$,
	$N_2,N_{\tau_2}\in\mathbb R^{m\times n}$ are known matrices,
	and $K\in\mathbb R^{n\times m}$ is to be designed.
	
	Suppose there exist a $2n\times 2n$ matrix $\tilde P>0$,
	two $n\times n$ matrices $\tilde Q>0$, $\tilde R>0$,
	a non-singular matrix $\tilde M\in\mathbb R^{n\times n}$,
	a matrix $\tilde G\in\mathbb R^{n\times m}$,
	and a scalar $\lambda$ such that the following LMI holds:
	\begin{align}\label{eq:LMI2}
		\tilde{\Theta}
		=&\ \mathrm{sym}(\tilde{\Pi}_1\tilde{P}\tilde{\Pi}_2^{\mathsf T})
		+\tilde{v}_1\tilde{Q}\tilde{v}_1^{\mathsf T}
		-\tilde{v}_2\tilde{Q}\tilde{v}_2^{\mathsf T}
		+\tau \tilde{v}_4 \tilde{R}\tilde{v}_4^{\mathsf T} \nonumber\\
		&-\frac{1}{\tau}\Big(
		(\tilde{v}_1-\tilde{v}_2)\tilde{R}(\tilde{v}_1-\tilde{v}_2)^{\mathsf T}
		+3\tilde{\rho}\tilde{R}\tilde{\rho}^{\mathsf T}
		\Big)\nonumber\\
		&+\mathrm{sym}((\lambda \tilde{v}_1+\tilde{v}_4)\tilde{M}\tilde{\mathcal N}_1^{\mathsf T})
		+\mathrm{sym}((\lambda \tilde{v}_1+\tilde{v}_4)\tilde{G}\tilde{\mathcal N}_2^{\mathsf T})
		<0,
	\end{align}
	where
	\begin{align}
		&\tilde{v}_i
		:=
		\big(\mathbf 0_{n\times (i-1)n}\ \ I_n\ \ \mathbf 0_{n\times (4-i)n}\big)^{\mathsf T},
		\qquad i=1,\dots,4, \nonumber\\
		&\tilde{\Pi}_1:= \big(\tilde{v}_1\ \ \tau \tilde{v}_3\big),
		\qquad
		\tilde{\Pi}_2:= \big(\tilde{v}_4\ \ \tilde{v}_1-\tilde{v}_2\big), \nonumber\\
		&\tilde{\rho}:=\tilde{v}_1+\tilde{v}_2-2\tilde{v}_3, \nonumber\\
		&\tilde{\mathcal N}_1^{\mathsf T}
		:=
		\big(N_1\quad N_{\tau_1}\quad \mathbf 0_{n\times n}\quad -I_n\big)
		\in\mathbb R^{n\times 4n}, \nonumber\\
		&\tilde{\mathcal N}_2^{\mathsf T}
		:=
		\big(N_2\quad N_{\tau_2}\quad \mathbf 0_{m\times 2n}\big)
		\in\mathbb R^{m\times 4n}. \nonumber
	\end{align}
	
	Then the system \eqref{A4} is asymptotically stable.
	Moreover, a feasible stabilizing gain is given by
	\begin{equation}\label{eq:ZfromG2}
		K=\tilde{M}^{-1}\tilde{G}. \nonumber 
	\end{equation}
\end{lemma}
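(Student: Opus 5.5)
Two routes are available. Lemma~\ref{lem3} could be obtained as a special case of Lemma~\ref{lem2}, but a direct Lyapunov--Krasovskii argument is cleaner. I will follow the same template as the proof of Lemma~\ref{lem2}, with the second delay interval removed.

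Define the augmented vector
\[
\tilde\zeta(t)=\begin{pmatrix} e(t)^{\mathsf T} & e(t-\tau)^{\mathsf T} & \frac1\tau\int_{t-\tau}^t e(s)^{\mathsf T}ds & \dot e(t)^{\mathsf T}\end{pmatrix}^{\mathsf T}\in\mathbb R^{4n}.
\]
Then $\tilde v_1^{\mathsf T}\tilde\zeta=e(t)$, $\tilde v_2^{\mathsf T}\tilde\zeta=e(t-\tau)$, $\tilde v_3^{\mathsf T}\tilde\zeta$ is the averaged integral, and $\tilde v_4^{\mathsf T}\tilde\zeta=\dot e(t)$. I take the functional
\[
V(e_t)=\eta(t)^{\mathsf T}\tilde P\eta(t)+\int_{t-\tau}^t e(s)^{\mathsf T}\tilde Q e(s)\,ds+\int_{-\tau}^0\!\int_{t+\theta}^t \dot e(s)^{\mathsf T}\tilde R\dot e(s)\,ds\,d\theta,
\]
where $\eta(t)=\big(e(t)^{\mathsf T},\ \int_{t-\tau}^t e(s)^{\mathsf T}ds\big)^{\mathsf T}=\tilde\Pi_1^{\mathsf T}\tilde\zeta(t)$. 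With $\tilde P>0$, $\tilde Q>0$, $\tilde R>0$, this $V$ is positive definite.

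Next I differentiate along trajectories. Since $\dot\eta=\big(\dot e,\ e(t)-e(t-\tau)\big)=\tilde\Pi_2^{\mathsf T}\tilde\zeta$, the first term contributes $\tilde\zeta^{\mathsf T}\mathrm{sym}(\tilde\Pi_1\tilde P\tilde\Pi_2^{\mathsf T})\tilde\zeta$. The second term contributes $\tilde\zeta^{\mathsf T}(\tilde v_1\tilde Q\tilde v_1^{\mathsf T}-\tilde v_2\tilde Q\tilde v_2^{\mathsf T})\tilde\zeta$. The third term gives $\tau\dot e^{\mathsf T}\tilde R\dot e-\int_{t-\tau}^t\dot e^{\mathsf T}\tilde R\dot e\,ds$.

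For this last integral I will use the Wirtinger-based integral inequality:
\[
-\int_{t-\tau}^t\dot e^{\mathsf T}\tilde R\dot e\,ds\le-\frac1\tau\big(\chi_0^{\mathsf T}\tilde R\chi_0+3\chi_1^{\mathsf T}\tilde R\chi_1\big),
\]
with $\chi_0=e(t)-e(t-\tau)$ and $\chi_1=e(t)+e(t-\tau)-\frac2\tau\int e$. These are $(\tilde v_1-\tilde v_2)^{\mathsf T}\tilde\zeta$ and $\tilde\rho^{\mathsf T}\tilde\zeta$ respectively, which gives exactly the bracketed term in \eqref{eq:LMI2}. This is the one step that needs a cited result rather than pure algebra, so I will quote it carefully (Seuret--Gouaisbaut) and state the sign and scaling convention explicitly.

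The delay dynamics are handled by a descriptor-type null term. By \eqref{A4}, $\tilde{\mathcal N}_1^{\mathsf T}\tilde\zeta+K\tilde{\mathcal N}_2^{\mathsf T}\tilde\zeta=0$. Multiplying on the left by $2\tilde\zeta^{\mathsf T}(\lambda\tilde v_1+\tilde v_4)\tilde M$ and adding the result to $\dot V$ leaves $\dot V$ unchanged. With $\tilde G=\tilde MK$, i.e.\ $K=\tilde M^{-1}\tilde G$, this added term equals
\[
\tilde\zeta^{\mathsf T}\big[\mathrm{sym}((\lambda\tilde v_1+\tilde v_4)\tilde M\tilde{\mathcal N}_1^{\mathsf T})+\mathrm{sym}((\lambda\tilde v_1+\tilde v_4)\tilde G\tilde{\mathcal N}_2^{\mathsf T})\big]\tilde\zeta.
\]
Hence $\dot V\le\tilde\zeta^{\mathsf T}\tilde\Theta\tilde\zeta$.

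The LMI $\tilde\Theta<0$ gives $\tilde\Theta\le-\varepsilon I$, so $\dot V\le-\varepsilon\|e(t)\|^2$. The standard Lyapunov--Krasovskii theorem then yields asymptotic stability of \eqref{A4} under $K=\tilde M^{-1}\tilde G$.

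Apart from the Wirtinger step, the main point to check is the bookkeeping that matches each derivative term to the selector matrices $\tilde v_i$, $\tilde\Pi_{1,2}$ and $\tilde\rho$.
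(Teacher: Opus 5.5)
Your proposal is correct and follows the paper's proof of Lemma~\ref{lem3} essentially step for step. It uses the same Lyapunov--Krasovskii functional built on $\zeta_0=(e(t),\int_{t-\tau}^t e(s)\,ds)$ with the augmented vector $(e(t),e(t-\tau),\frac1\tau\int_{t-\tau}^t e(s)\,ds,\dot e(t))$, the Seuret--Gouaisbaut Wirtinger bound on $-\int_{t-\tau}^t\dot e^{\mathsf T}\tilde R\dot e\,ds$, and the descriptor null term with $\tilde G=\tilde MK$, which together give $\dot V\le\tilde\zeta^{\mathsf T}\tilde\Theta\tilde\zeta<0$. The paper only says it follows ``the same derivation steps as in the proof of Lemma~\ref{lem2}'', so your explicit matching of $\tilde\Pi_1^{\mathsf T}\tilde\zeta$, $\tilde\Pi_2^{\mathsf T}\tilde\zeta$ and $\tilde\rho^{\mathsf T}\tilde\zeta$ to the derivative terms supplies the bookkeeping that proof leaves implicit.
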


\begin{proof}
	See Appendix.	
\end{proof}	

\section{Existence and Design of Functional Observers for the Case $h=\tau$} \label{s1}
As established in the problem formulation, without loss of
generality it suffices to consider the case $h=\tau$ whenever
the delay associated with the output measurement does not
exceed the state delay. In this scenario, the measurement
equation reduces to
\[
y(t)=Cx(t-\tau),
\qquad C:=C_\tau+C_h,
\]
so that the plant and measurement delays are aligned.
This structural alignment permits a complete characterization
of the functional observer existence problem.

Consider the estimation of $z(t)=Fx(t)$ using
Functional Observer Structure-A when $h=\tau$.
Defining the estimation error $e(t)=\hat z(t)-z(t)$, the error dynamics are given by
\begin{align}
	\label{7ty1}
	\dot{e}(t)&=\dot{w}(t)+M\dot{y}(t)-F\dot{x}(t)\nonumber\\ \quad &=Ne(t)+\mathcal{L}_{1}x(t)+\mathcal{L}_{2}u(t) +\mathcal{L}_{3}u(t-\tau)+\mathcal{L}_{4}x(t-\tau)\nonumber\\ \quad &+\mathcal{L}_{5}x(t-2\tau)
\end{align}
where\\ 
$\mathcal{L}_1 = NF-FA,\quad \mathcal{L}_2 = J-FB, \quad\mathcal{L}_3 = J_\tau+MCB$\\
$\mathcal{L}_4 = GC-NMC+MCA-FA_\tau$, \quad $\mathcal{L}_5 = G_\tau C+MCA_\tau$ 
%
and 
$
\mathcal L
:=
\begin{pmatrix}
	\mathcal L_1 &
	\mathcal L_2 &
	\mathcal L_3 &
	\mathcal L_4 &
	\mathcal L_5
\end{pmatrix}.
$
\medskip

The following theorem characterizes the necessary and sufficient
conditions for the existence of a minimal-order Functional Observer
Structure-A. The minimal order is
\[
r := \rank(F),
\]
which equals the number of independent functionals to be estimated.

\medskip

\begin{thm}\label{thm:decoupleA}
	For $h=\tau$, Functional Observer Structure-A of order $r=\rank(F)$ provides
	asymptotic estimation of the functional $z(t)=Fx(t)$ and yields
	estimation error dynamics that are decoupled from the plant state
	if and only if $\mathcal L=\bf 0$ and $N$ is Hurwitz.
	In this case, the estimation error satisfies
	\[
	e(t)=\hat z(t)-z(t)\to {\bf 0} \quad \text{as} \quad t\to\infty
	\]
	for all admissible initial conditions and inputs $u(\cdot)$.
\end{thm}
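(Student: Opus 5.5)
The proof rests on the error equation \eqref{7ty1}. It has the form
\[
\dot e(t)=Ne(t)+\mathcal L\,\eta(t),\qquad
\eta(t):=\begin{pmatrix}x(t)^{\mathsf T}& u(t)^{\mathsf T}& u(t-\tau)^{\mathsf T}& x(t-\tau)^{\mathsf T}& x(t-2\tau)^{\mathsf T}\end{pmatrix}^{\mathsf T},
\]
and I will treat the two directions separately. The identity itself comes from differentiating $\hat z=w+My$, using $\dot y(t)=C\dot x(t-\tau)$, substituting the plant and observer equations, and replacing $w(t)=e(t)+Fx(t)-MCx(t-\tau)$. This yields exactly $\mathcal L_1,\dots,\mathcal L_5$ as listed, so I take it as given.

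\emph{Sufficiency.} Suppose $\mathcal L=\mathbf 0$ and $N$ is Hurwitz. Then the error satisfies $\dot e=Ne$, so $e(t)=e^{Nt}e(0)$ with $e(0)=w(0)+MC\phi(-\tau)-F\phi(0)$. Since $N$ is Hurwitz, $e(t)\to\mathbf 0$ exponentially for every initial function $\phi$, every $w(0)$ and every input $u(\cdot)$. The error dynamics are then autonomous and contain no $x$ or $u$ terms, which is what ``decoupled from the plant state'' means.

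\emph{Necessity.} Suppose the observer yields error dynamics decoupled from the plant state and input, and $e(t)\to\mathbf 0$ for all admissible data. Decoupling means $\mathcal L\eta(t)\equiv\mathbf 0$ along every trajectory, for all $\phi$ and $u$. To conclude $\mathcal L=\mathbf 0$ I would argue that the components of $\eta(t)$ can be chosen independently at a single time instant.
\begin{itemize}
\item Fix $t_0=2\tau$. Then $x(t_0-2\tau)=\phi(0)$ is arbitrary.
\item The value $x(t_0-\tau)=x(\tau)$ can be steered to any target by choosing $u$ on $[0,\tau)$. If $(A,B)$ is not controllable, I instead fix $t_0$ inside the initial interval by exploiting arbitrary $\phi$. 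An alternative is to evaluate at $t\in(0,\tau)$, where $x(t-\tau)=\phi(t-\tau)$ and $x(t-2\tau)$ are both free; I would allow the data to extend over $[-2\tau,0]$ if needed.
\item $u(t_0)$ and $u(t_0-\tau)$ are free, since inputs may be discontinuous or piecewise continuous.
\item $x(t_0)$ can be made independent of the delayed values by choosing $u$ to be large on a short interval just before $t_0$.
\end{itemize}
The cleanest route avoids controllability altogether. The relation $\mathcal L\eta(t)\equiv\mathbf 0$ holds for all $t$ and all data, so I would vary $\phi$ and $u$ one at a time and read off each block: first $\mathcal L_2$ and $\mathcal L_3$ from jumps in $u$, then $\mathcal L_5$ and $\mathcal L_4$ from variations of $\phi$, and finally $\mathcal L_1$ from $x(t)$.

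With $\mathcal L=\mathbf 0$ established, the error obeys $\dot e=Ne$. Because $e(0)$ can be any vector (through $w(0)$), convergence for all initial conditions forces $e^{Nt}\to\mathbf 0$, so $N$ is Hurwitz.

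The main obstacle is making the independence argument in the necessity direction rigorous, namely that $x(t)$, $x(t-\tau)$ and $x(t-2\tau)$ can take arbitrary independent values simultaneously. This is where the memory of the delayed system is essential. I expect to handle it by taking $t$ in the first delay interval and choosing the initial function freely, treating the initial data as defined on $[-2\tau,0]$ so that the term $x(t-2\tau)$ is well defined.
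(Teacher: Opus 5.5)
Your sufficiency argument, and the last step of necessity (once $\mathcal L=\mathbf 0$, the freedom in $w(0)$ forces $N$ to be Hurwitz), match the paper. The gap is the step from decoupling to $\mathcal L=\mathbf 0$. You read decoupling as $\mathcal L\eta(t)\equiv\mathbf 0$ along all trajectories, and you plan to extract each block from the independence of $x(t)$, $x(t-\tau)$, $x(t-2\tau)$. Where \eqref{7ty1} holds, these values are in general \emph{not} independent, and the implication itself then fails.

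Here is a counterexample. Take $n=1$ and $\dot x=ax$ (so $A_\tau=0$, $B=0$), with $C=1$ and $F=1$. Choose $M=e^{a\tau}$, $G=G_\tau=J=J_\tau=0$, and any $N<0$ with $N\neq a$. For $t\ge\tau$ we have $x(t)=e^{a\tau}x(t-\tau)$, so $\hat z(t)=w(t)+x(t)$ and $\dot e=Ne$ along every trajectory. Yet $\mathcal L_1=N-a\neq0$ and $\mathcal L_4=-(N-a)e^{a\tau}\neq0$. For $a\ge0$, $\mathcal L=\mathbf 0$ would even force $N=a$, so under your reading the ``only if'' is false. In such cases, varying $\phi$ and $u$ one at a time only identifies combinations such as $\mathcal L_1e^{A\tau}+\mathcal L_4$.

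Your fallback of evaluating at $t\in(0,\tau)$ with data on $[-2\tau,0]$ does not rescue the argument. Equation \eqref{7ty1} is valid only for $t\ge\tau$: for $t<\tau$ one has $\dot y(t)=C\dot\phi(t-\tau)$. This equals $C\big(Ax(t-\tau)+A_\tau x(t-2\tau)+Bu(t-\tau)\big)$ only if $\phi$ itself solves the plant equation, and then its values are no longer free. For the same reason, in sufficiency you should write $e(t)=e^{N(t-\tau)}e(\tau)$ rather than $e^{Nt}e(0)$; this is harmless.

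The missing idea is that the paper treats decoupling \emph{structurally}. It regards \eqref{7ty1} as a linear expression in the signals $x(t)$, $x(t-\tau)$, $x(t-2\tau)$, $u(t)$, $u(t-\tau)$. ``Decoupled from the plant state'' then means that the coefficients of the $x$-terms vanish, so $\mathcal L_1=\mathcal L_4=\mathcal L_5=\mathbf 0$ is read off directly. The conditions $\mathcal L_2=\mathcal L_3=\mathbf 0$ follow from requiring convergence for all inputs. With that definition your necessity proof becomes a direct reading of \eqref{7ty1}, and no independence argument is needed.

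If you want to keep the trajectory-wise notion, you must add a hypothesis that restores independence. For instance, you could require that the smallest $A$-invariant subspace containing the ranges of $A_\tau$ and $B$ be all of $\mathbb R^n$. Then $x(\tau)$, $\phi(0)$, $\phi(-\tau)$, $u(\tau)$, $u(0)$ can be assigned independently, and the relation at $t=\tau$ yields $\mathcal L=\mathbf 0$.
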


\medskip

\begin{proof}
	Since $N\in\mathbb{R}^{r\times r}$, the internal dynamics of Functional Observer Structure-A are $r$-dimensional, implying that the observer order equals the dimension of the functional $z(t)\in\mathbb{R}^r$.
	
	\medskip
	
	\emph{Sufficiency:}
	If $\mathcal L=\bf 0$, then \eqref{7ty1} reduces to $\dot e(t)=Ne(t)$, so the error dynamics are
	decoupled from $x(\cdot)$. If, in addition, $N$ is Hurwitz, then $e(t)\to \bf 0$ as $t\to\infty$
	for all admissible initial conditions (independently of $u(\cdot)$).
	
	\emph{Necessity:}
	Assume the error dynamics are decoupled from the plant state and satisfy $e(t)\to \bf 0$ for all
	admissible initial conditions and all admissible inputs $u(\cdot)$.
	Decoupling from $x(\cdot)$ implies that the coefficients multiplying $x(t)$, $x(t-\tau)$,
	and $x(t-2\tau)$ in \eqref{7ty1} must vanish, hence $\mathcal L_1=\mathcal L_4=\mathcal L_5= \bf0$.
	Moreover, convergence for all inputs requires the terms $\mathcal L_2u(t)$ and $\mathcal L_3u(t-\tau)$
	to vanish for all admissible $u(\cdot)$, which implies $\mathcal L_2=\mathcal L_3=\bf0$.
	Therefore $\mathcal L=\bf0$, and \eqref{7ty1} reduces to $\dot e(t)=Ne(t)$.
Finally, asymptotic convergence
\[
e(t)\to {\bf 0} \quad \text{as} \quad t \to\infty
\]
for all initial conditions holds if and only if $N$ is Hurwitz.
This completes the proof.
\end{proof}

\medskip

\begin{remark}
	The decoupling condition $\mathcal L=\bf0$ ensures that the estimation error dynamics are autonomous, i.e.,
	\[
	\dot e(t)=Ne(t)
	\]
	and therefore independent of the plant state and input signals. This property is crucial because it guarantees that the convergence of the estimation error is determined solely by the internal observer dynamics and is unaffected by plant behaviour, external disturbances, or input excitations. In particular, once decoupling is achieved, the error convergence rate is governed exclusively by the eigenvalues of $N$.
\end{remark}

\medskip

We now characterize the solvability of the observer parameter equations
for the case $h=\tau$. As shown previously, asymptotic estimation with
error dynamics decoupled from the plant state requires the algebraic
condition $\mathcal L=\bf0$ together with the stability condition that $N$
be Hurwitz. The following theorem reformulates these requirements
as explicit rank and spectral conditions.

\medskip

\begin{thm}\label{thm:decouple_htau}
	Conditions $\mathcal L=\bf0$ and $N$ Hurwitz are equivalent to the following conditions
	\begin{itemize}
		\item[(i)] $\displaystyle
		\rank\!\begin{pmatrix}FA\\ F\end{pmatrix}= \rank(F)$,
		\item[(ii)] $FAF^{-}$ is Hurwitz,
		\item[(iii)] $\displaystyle
		\rank\!\begin{pmatrix}\Upsilon\\ \Theta \end{pmatrix}= \rank(\Theta)$,
	\end{itemize}
	where
	\begin{IEEEeqnarray}{rcl}
		\Theta &=&
		\begin{pmatrix}
			C  & \bf 0\\
			\bf 0  & C\\
			CA & CA_\tau
		\end{pmatrix},\qquad
		\Upsilon =
		\begin{pmatrix}
			FA_\tau & \bf 0
		\end{pmatrix}. \nonumber
	\end{IEEEeqnarray}
\end{thm}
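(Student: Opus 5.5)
Since $F$ has full row rank $r$, $F F^{-}=I_r$ for any generalized inverse, and every matrix equation $XF=\Upsilon$ has at most one solution, namely $X=\Upsilon F^{-}$. Moreover, the five blocks of $\mathcal L$ split into three groups. The blocks $\mathcal L_2=\mathbf 0$ and $\mathcal L_3=\mathbf 0$ involve only $J,J_\tau$. They are always solvable by the explicit choice
\[
J=FB,\qquad J_\tau=-MCB,
\]
so they impose no condition and can be set aside. The block $\mathcal L_1=\mathbf 0$ involves only $N$. The blocks $\mathcal L_4=\mathbf 0$ and $\mathcal L_5=\mathbf 0$ involve $(G,G_\tau,M)$ once $N$ is fixed. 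I will therefore prove the equivalence by handling $\mathcal L_1$ together with the Hurwitz requirement first, and then $\mathcal L_4,\mathcal L_5$.

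\emph{Step 1 ($\mathcal L_1=\mathbf 0$ and $N$ Hurwitz $\iff$ (i),(ii)).} The equation $NF=FA$ is of the form $X\Theta=\Upsilon$ in Lemma~\ref{lem1} with $\Theta=F$ and $\Upsilon=FA$. It is solvable if and only if (i) holds. Because $F$ has full row rank, $I-FF^{-}=\mathbf 0$, so the general solution collapses to the unique $N=FAF^{-}$. Hence, given (i), $N$ is Hurwitz exactly when (ii) holds. Conversely, $\mathcal L_1=\mathbf 0$ forces (i) and $N=FAF^{-}$, and $N$ Hurwitz then gives (ii). A minor point to record is that, under (i), $FAF^{-}$ does not depend on the choice of generalized inverse. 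This follows from $FA=XF$, which gives $FAF^{-}=XFF^{-}=X$.

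\emph{Step 2 ($\mathcal L_4=\mathcal L_5=\mathbf 0$ for some $G,G_\tau,M$ $\iff$ (iii)).} With $N$ fixed, I stack the two equations horizontally:
\[
\begin{pmatrix}\mathcal L_4 & \mathcal L_5\end{pmatrix}
=\begin{pmatrix} G & G_\tau & M\end{pmatrix}
\begin{pmatrix} C & \mathbf 0\\ \mathbf 0 & C\\ CA-NC & CA_\tau\end{pmatrix}
-\begin{pmatrix} FA_\tau & \mathbf 0\end{pmatrix}.
\]
The main obstacle is the $-NMC$ term, which depends on $N$, whereas $\Theta$ in (iii) does not. I will remove it by a column-free row operation. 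Write $G' := G - NM$. Then
\[
GC-NMC+MCA=G'C+MCA,
\]
so the coefficient matrix becomes exactly $\Theta$ and the unknown becomes $X=(G'\ G_\tau\ M)$. The substitution $G\leftrightarrow G'$ is a bijection for fixed $N,M$, so solvability is unaffected. Equivalently, the row spaces of the two coefficient matrices coincide, since the bottom block row differs from $(CA\ \ CA_\tau)$ only by a combination $-N(C\ \ \mathbf 0)$ of the top block row.

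Lemma~\ref{lem1} now says that $X\Theta=\Upsilon$ is solvable if and only if (iii) holds. A solution $X=(X_1\ X_2\ X_3)$ yields observer gains via
\[
M=X_3,\qquad G_\tau=X_2,\qquad G=X_1+NX_3 .
\]
Combining Steps 1 and 2 with the free choice of $J,J_\tau$ gives both directions of the theorem. Conditions (i)--(iii) produce parameters with $\mathcal L=\mathbf 0$ and $N$ Hurwitz. Conversely, any such parameters yield (i) and (ii) from Step 1 and (iii) from Step 2. The only real subtlety is the reparametrization that eliminates $N$ from the coefficient matrix, together with the remark that, because $F$ has full row rank, $N$ is forced to equal $FAF^{-}$ and so leaves no freedom to exploit.
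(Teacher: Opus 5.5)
Your proof is correct and follows the paper's argument essentially step for step: Lemma~\ref{lem1} with full row rank of $F$ forces $N=FAF^{-}$ from $\mathcal L_1=\mathbf 0$, the substitution $\bar G=G-NM$ turns $\mathcal L_4=\mathcal L_5=\mathbf 0$ into $X\Theta=\Upsilon$ with $X=(\bar G\ \ G_\tau\ \ M)$, and the choices $J=FB$, $J_\tau=-MCB$ dispose of $\mathcal L_2$ and $\mathcal L_3$. One slip should be deleted: the first display in Step~2 (bottom block $CA-NC$) and the ``row spaces coincide'' remark are ill-typed, because $N$ is $r\times r$ while $C$ is $p\times n$, and the term in question is $-(NM)C$, which belongs to the coefficient of $C$ and is absorbed into $G$ by the change of unknowns, not by a row operation on the coefficient matrix; the substitution argument you give right after is the correct one, and it suffices on its own.
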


\medskip 

\begin{proof}
	The condition $\mathcal L_1=\bf0$ is equivalent to
	\begin{equation}\label{eq21}
		NF=FA.
	\end{equation}
	By Lemma~\ref{lem1}, \eqref{eq21} admits a solution $N$ if and only if
	$\rank\!\begin{pmatrix}FA\\ F\end{pmatrix}=\rank(F)$, which gives item~(i).
	Moreover, since $F$ is full row rank there exists a right inverse $F^{-}$ with $FF^{-}=I_r$,
	and any solution of \eqref{eq21} satisfies $N=FAF^{-}$; hence $N$ is Hurwitz if and only if
	$FAF^{-}$ is Hurwitz, which yields item~(ii).
	
	Next, $\mathcal L_4=\bf0$ and $\mathcal L_5=\bf0$ are equivalent to the linear matrix equation
	\begin{equation}\label{eq22}
		X\Theta=\Upsilon,
	\end{equation}
	with $\Theta,\Upsilon$ as defined above and $X$ given by
		\begin{IEEEeqnarray}{rcl}
	X &=&
	\begin{pmatrix}
		\bar G & G_\tau & M
	\end{pmatrix},\qquad
	\bar G = G-NM.  \nonumber
		\end{IEEEeqnarray}
	By Lemma~\ref{lem1}, \eqref{eq22} is solvable
	if and only if $\rank\!\begin{pmatrix}\Upsilon\\ \Theta \end{pmatrix}=\rank(\Theta)$,
	which gives item~(iii). Moreover, the observer parameters contained in $X$ are obtained from the general solution of the matrix equation $X\Theta=\Upsilon$ given in Lemma~\ref{lem1}, where the free matrix $Z$ reflects the inherent design freedom available in the observer construction.

	Finally, $\mathcal L_2=\bf0$ and $\mathcal L_3=\bf0$ are satisfied by the direct choices
	$J=FB$ and $J_\tau=-MCB$, respectively. Therefore $\mathcal L=\bf0$ holds.
	
	Moreover, $\mathcal L=\bf0$ and $N$ Hurwitz are equivalent to
	items (i)–(iii), which completes the proof.
\end{proof}

\medskip

\begin{remark}
	When conditions (i)–(iii) are satisfied, the matrix $N$ is uniquely
	determined as
	\[
	N = FAF^{-}.
	\]
	The remaining observer parameters are obtained from the general
	solution of the linear matrix equation
	\[
	X\Theta = \Upsilon,
	\qquad
	X = \begin{pmatrix} \bar G & G_\tau & M \end{pmatrix},
	\]
	as given in Lemma~\ref{lem1}. 
	
	The free matrix $Z$ parameterizes the family of
	solutions of this equation, thereby determining the admissible
	triples $(\bar G, G_\tau, M)$. The observer gain $G$ is then recovered from
	\[
	G = \bar G + NM.
	\]
\end{remark}

\bigskip

The following corollary consolidates Theorems~\ref{thm:decoupleA}
and~\ref{thm:decouple_htau} into a single necessary and sufficient
existence condition for a minimal-order Functional Observer
Structure-A when $h=\tau$.

\medskip

\begin{corol}\label{cor1}
	For $h=\tau$, Functional Observer Structure-A of order $r = \rank(F)$ provides
	asymptotic estimation of the functional $z(t)=Fx(t)$ and yields
	estimation error dynamics decoupled from the plant state if and only if
	\begin{itemize}
		\item[(i)] $\displaystyle
		\rank\!\begin{pmatrix}FA\\ F\end{pmatrix}= \rank(F)$,
		\item[(ii)] $FAF^{-}$ is Hurwitz,
		\item[(iii)] $\displaystyle
		\rank\!\begin{pmatrix}\Upsilon\\ \Theta \end{pmatrix}= \rank(\Theta)$,
	\end{itemize}
	where
	\begin{IEEEeqnarray}{rcl}
		\Theta &=&
		\begin{pmatrix}
			C  & \bf0\\
			\bf0  & C\\
			CA & CA_\tau
		\end{pmatrix},\qquad
		\Upsilon =
		\begin{pmatrix}
			FA_\tau & \bf0
		\end{pmatrix}. \nonumber 
	\end{IEEEeqnarray}

\medskip

	\noindent In this case, the estimation error satisfies
	\[
	e(t)=\hat z(t)-z(t)\to \bf0 \quad \text{as \textit t}\to\infty
	\]
	for all admissible initial conditions and inputs $u(\cdot)$.
\end{corol}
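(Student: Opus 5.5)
The corollary follows by chaining Theorem~\ref{thm:decoupleA} and Theorem~\ref{thm:decouple_htau}, and the plan is to make the chain explicit in both directions. By Theorem~\ref{thm:decoupleA}, Structure-A of order $r=\rank(F)$ gives asymptotic, plant-decoupled estimation of $z(t)=Fx(t)$ if and only if there exist observer matrices $M,N,G,G_\tau,J,J_\tau$ with $\mathcal L=\mathbf 0$ and $N$ Hurwitz. By Theorem~\ref{thm:decouple_htau}, such matrices exist if and only if items (i)--(iii) hold. Composing the two equivalences gives the claim, and the convergence statement $e(t)\to\mathbf 0$ is then inherited from the sufficiency part of Theorem~\ref{thm:decoupleA}.

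Since Theorem~\ref{thm:decouple_htau} is phrased as an equivalence of conditions, I will check existence carefully in each direction.

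\emph{Necessity.} Suppose an admissible observer exists. Then $NF=FA$ holds, so Lemma~\ref{lem1} gives (i). Because $F$ has full row rank, $FF^{-}=I_r$ for any generalized inverse, so $N=NFF^{-}=FAF^{-}$; thus $FAF^{-}$ is Hurwitz, which is (ii). The relations $\mathcal L_4=\mathcal L_5=\mathbf 0$ say exactly that $X=(G-NM\ \ G_\tau\ \ M)$ solves $X\Theta=\Upsilon$. To confirm this, I will expand
\[
(G-NM)C\,x(t-\tau)+G_\tau C\,x(t-2\tau)+M\big(CA\,x(t-\tau)+CA_\tau x(t-2\tau)\big),
\]
compare coefficients with $FA_\tau x(t-\tau)$, and note the sign convention $\mathcal L_4=\mathbf 0\iff (G-NM)C+MCA=FA_\tau$. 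Lemma~\ref{lem1} then gives (iii).

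\emph{Sufficiency.} Suppose (i)--(iii) hold. I will construct the observer as follows:
\begin{itemize}
\item set $N=FAF^{-}$;
\item choose $X=(\bar G\ \ G_\tau\ \ M)$ from the general solution in Lemma~\ref{lem1}, which exists by (iii);
\item set $G=\bar G+NM$, $J=FB$, and $J_\tau=-MCB$.
\end{itemize}
The \textbf{main obstacle} is showing that $N=FAF^{-}$ really satisfies $NF=FA$; the argument needs (i), not just the formula. By (i), $FA=YF$ for some $Y$. Then
\[
FAF^{-}F=YFF^{-}F=YF=FA.
\]
Together with the chosen $X$, this gives $\mathcal L_1=\mathcal L_4=\mathcal L_5=\mathbf 0$, and the choices of $J$ and $J_\tau$ give $\mathcal L_2=\mathcal L_3=\mathbf 0$. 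Condition (ii) makes $N$ Hurwitz, and Theorem~\ref{thm:decoupleA} then yields $e(t)\to\mathbf 0$ for all admissible initial conditions and inputs $u(\cdot)$.
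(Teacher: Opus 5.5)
Your proposal is correct and follows the paper's approach: the paper gives no separate proof and obtains the corollary directly by combining Theorems~\ref{thm:decoupleA} and~\ref{thm:decouple_htau}. Your existence-quantified reading of Theorem~\ref{thm:decouple_htau} and your explicit check that $FAF^{-}F=FA$ under (i) just spell out steps the paper leaves implicit.
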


\bigskip

Suppose that, for $h=\tau$, the existence conditions in Corollary~\ref{cor1}
for Functional Observer Structure-A of order $r$ are not satisfied.
In this case, a higher-order functional observer (i.e., of order greater
than $r$) may still exist.

The design of such a higher-order observer (if it exists) can be achieved by augmenting
the set of functionals to be estimated. Since the order of the observer
equals the number of functionals being estimated, increasing the observer
order corresponds to enlarging the functional subspace.

Let
\[
z_a(t)=Rx(t), \qquad R\in\mathbb{R}^{(q-r)\times n}, \quad q>r,
\]
be an additional functional such that
\[
\begin{pmatrix}
	F\\
	R
\end{pmatrix}
\]
has full row rank.
The augmented functional is then defined as
\[
z_{\mathrm{aug}}(t)
= \bar Fx(t) =
\begin{pmatrix}
	F\\
	R
\end{pmatrix}x(t),
\]
where $\bar F:=\begin{pmatrix}F\\ R\end{pmatrix}\in\mathbb{R}^{q\times n}$.

Consequently, the necessary and sufficient conditions for the existence 
of a Functional Observer Structure-A of order 
$q := \rank(\bar F)$, estimating the augmented functional 
$z_{\mathrm{aug}}(t)$ (and therefore estimate $z(t)$ as well), 
follow directly from Corollary~\ref{cor1} upon replacing 
$F$ by $\bar F$ and $r$ by $q$.

\medskip

\begin{corol}\label{cor2}
		For $h=\tau$, Functional Observer Structure-A of order $q = \rank(\bar F)$ provides
	asymptotic estimation of the functional 
	$z_{\mathrm{aug}}(t) := \bar Fx(t)
	 =\begin{pmatrix}F\\R\end{pmatrix}x(t)$ and yields
 estimation error dynamics decoupled from the plant state if and only if
 \begin{itemize}
 	\item[(i)] $\displaystyle
 	\rank\!\begin{pmatrix}\bar FA\\ \bar F\end{pmatrix}= \rank\begin{pmatrix}\bar F\end{pmatrix}$
 	\item[(ii)] $\bar FA\bar F^{-}$ is Hurwitz
 	\item[(iii)] $\displaystyle
 	\rank\!\begin{pmatrix}\bar \Upsilon\\ \Theta \end{pmatrix}= \rank(\Theta)$
 \end{itemize}
 where
 \begin{IEEEeqnarray}{rcl}
 	\Theta &=&
 	\begin{pmatrix}
 		C  & \bf0\\
 		\bf0  & C\\
 		CA & CA_\tau
 	\end{pmatrix},\qquad
 	\bar \Upsilon =
 	\begin{pmatrix}
 		\bar FA_\tau & \bf0
 	\end{pmatrix}. \nonumber 
 \end{IEEEeqnarray}
 In this case, the estimation error 
 \[
 e_{\mathrm{aug}}(t)=\hat z_{\mathrm{aug}}(t)-z_{\mathrm{aug}}(t)\to \bf0 \quad \text{as \textit t }\to\infty
 \]
 for all admissible initial conditions and inputs $u(\cdot)$.
\end{corol}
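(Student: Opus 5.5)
The plan is to obtain Corollary~\ref{cor2} as a direct instance of Corollary~\ref{cor1}, that is, of Theorems~\ref{thm:decoupleA} and~\ref{thm:decouple_htau}, with $F$ replaced by $\bar F$ and $r$ by $q$. I will check that nothing in those two results used a property of $F$ that $\bar F$ lacks. By construction $\bar F=\begin{pmatrix}F\\ R\end{pmatrix}\in\mathbb R^{q\times n}$ has full row rank, so $q=\rank(\bar F)$. Hence the standing hypothesis of Section~\ref{A}, namely that the functional matrix is of full row rank, holds for $\bar F$. In particular a right inverse $\bar F^{-}$ with $\bar F\bar F^{-}=I_q$ exists.

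\textbf{Step 1 (error dynamics).} I will consider Structure-A of order $q$, with $w(t)\in\mathbb R^q$, $N\in\mathbb R^{q\times q}$ and the remaining gains sized accordingly, driven by $y(t)=Cx(t-\tau)$. I will repeat the derivation of \eqref{7ty1} with $e_{\mathrm{aug}}=\hat z_{\mathrm{aug}}-\bar Fx$. This gives the same expression with every $F$ replaced by $\bar F$: $\bar{\mathcal L}_1=N\bar F-\bar FA$, $\bar{\mathcal L}_2=J-\bar FB$, $\bar{\mathcal L}_3=J_\tau+MCB$, $\bar{\mathcal L}_4=GC-NMC+MCA-\bar FA_\tau$, and $\bar{\mathcal L}_5=G_\tau C+MCA_\tau$.

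\textbf{Step 2 (apply Theorem~\ref{thm:decoupleA}).} The proof of Theorem~\ref{thm:decoupleA} only reads off coefficients of $x(t)$, $x(t-\tau)$, $x(t-2\tau)$, $u(t)$ and $u(t-\tau)$. It is therefore dimension-agnostic. Decoupled asymptotic estimation of $z_{\mathrm{aug}}$ is thus equivalent to $\bar{\mathcal L}=\mathbf 0$ together with $N$ being Hurwitz.

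\textbf{Step 3 (apply Theorem~\ref{thm:decouple_htau}).} First, $N\bar F=\bar FA$ is solvable iff (i) holds, by Lemma~\ref{lem1}. Multiplying on the right by $\bar F^{-}$ gives $N=\bar FA\bar F^{-}$, which is why (ii) appears. Second, $\bar{\mathcal L}_4=\bar{\mathcal L}_5=\mathbf 0$ is the equation $X\Theta=\bar\Upsilon$ with $X=(G-NM\ \ G_\tau\ \ M)$. Since $\Theta$ is unchanged, Lemma~\ref{lem1} yields (iii). Third, $\bar{\mathcal L}_2=\bar{\mathcal L}_3=\mathbf 0$ is met by choosing $J=\bar FB$ and $J_\tau=-MCB$. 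Finally, since $\bar Fx$ contains $Fx$ as its first $r$ rows, the first $r$ components of $\hat z_{\mathrm{aug}}$ estimate $z$, giving the last claim.

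\textbf{Main obstacle.} There is no real obstacle. The only point to verify is that $N$ is \emph{uniquely} determined as $\bar FA\bar F^{-}$, independently of the chosen generalized inverse. This follows from $N=N\bar F\bar F^{-}=\bar FA\bar F^{-}$, using full row rank of $\bar F$, and it makes the Hurwitz condition (ii) well posed.
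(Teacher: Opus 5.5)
Your proposal is correct and follows the paper's route. The paper obtains Corollary~\ref{cor2} simply by invoking Corollary~\ref{cor1} with $F$ replaced by $\bar F$ and $r$ by $q$. You do the same, and you add an explicit check that the paper leaves implicit: the error coefficients carry over, and full row rank of $\bar F$ makes $\bar F^{-}$ a right inverse, so $N=\bar FA\bar F^{-}$ is well defined.
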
	
\bigskip

\noindent{\bf Projector-based construction of $R$}

\medskip

Let
\[
\bar F_0 :=
\begin{pmatrix}
	F\\
	FA\\
	\vdots\\
	FA^{n-1}
\end{pmatrix},
\quad
q := \rank(\bar F_0),
\quad
r := \rank(F).
\]
Since $F$ is full row rank, define the orthogonal projector onto $\row(F)$ as
\[
P_F := F^{\mathsf T}(F F^{\mathsf T})^{-1}F.
\]
Define
\[
R_0 := (I - P_F)\bar F_0.
\]
Then
\[
\rank(R_0)=q-r.
\]
Let $R$ be any full--row--rank matrix whose rows form a basis of $\row(R_0)$.
Then $R\in\mathbb{R}^{(q-r)\times n}$ and
\[
\bar F :=
\begin{pmatrix}
	F\\
	R
\end{pmatrix}
\in\mathbb{R}^{q\times n}
\]
satisfies
\[
\row(\bar F)=\row(\bar F_0),
\]
and hence
\[
\rank\!\begin{pmatrix}
	\bar F A\\
	\bar F
\end{pmatrix}
=
\rank(\bar F).
\]

%

\medskip

With the above projector-based construction of $R$, the invariance condition
\[
\rank\!\begin{pmatrix}
	\bar F A\\
	\bar F
\end{pmatrix}
=
\rank(\bar F).
\]
is automatically satisfied. Consequently,  the existence conditions in Corollary~\ref{cor1} reduce to the remaining stability and solvability conditions stated below.

\bigskip

\begin{thm}\label{thm:aug-remove-i}
	Define the augmented functional
	\[
	z_{\mathrm{aug}}(t)=\bar F x(t)=
	\begin{pmatrix}z(t)\\ z_a(t)\end{pmatrix},
	\quad z(t)=Fx(t).
	\]
	For $h=\tau$, Functional Observer Structure-A of order $q=\rank(\bar F)$ provides asymptotic estimation of
	$z_{\mathrm{aug}}(t)$ (and hence estimation of $z(t)$ as well) and yields estimation error dynamics decoupled from the plant state
	if and only if
	\begin{itemize}
		\item[(i)] $\bar F A \bar F^{-}$ is Hurwitz,
		\item[(ii)] $\displaystyle
		\rank\!\begin{pmatrix}\bar\Upsilon\\ \Theta\end{pmatrix}=\rank(\Theta),
		$
	\end{itemize}
	where 
	\begin{IEEEeqnarray}{rcl}
		\Theta &=&
		\begin{pmatrix}
			C & \bf0\\
			\bf0 & C\\
			CA & CA_\tau
		\end{pmatrix},\qquad
		\bar\Upsilon =
		\begin{pmatrix}
			\bar F A_\tau & \bf0
		\end{pmatrix}.
	\end{IEEEeqnarray}
	In this case, the estimation error $e_{\mathrm{aug}}(t)=\hat z_{\mathrm{aug}}(t)-z_{\mathrm{aug}}(t)$ satisfies
	$e_{\mathrm{aug}}(t)\to \bf0$ as $t\to\infty$ (therefore $e(t)\to \bf0$ as $t\to\infty$ as well) for all admissible initial conditions and inputs $u(\cdot)$.
\end{thm}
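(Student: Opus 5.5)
The plan is to reduce the statement to Corollary~\ref{cor2}. The only work is to show that condition (i) of that corollary, the invariance condition $\rank\begin{pmatrix}\bar F A\\ \bar F\end{pmatrix}=\rank(\bar F)$, holds automatically for the projector-based choice of $R$. Conditions (ii) and (iii) of Corollary~\ref{cor2} are then exactly conditions (i) and (ii) of the present theorem. Both directions of the equivalence, and the convergence of $e_{\mathrm{aug}}$, then follow verbatim from Corollary~\ref{cor2}, itself obtained from Theorems~\ref{thm:decoupleA} and~\ref{thm:decouple_htau} with $F$ replaced by $\bar F$. Since $z(t)$ is the first block of $z_{\mathrm{aug}}(t)$, $e(t)$ is the first $r$ components of $e_{\mathrm{aug}}(t)$, so $e(t)\to\mathbf 0$ as well.

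The key steps are as follows.

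First, I verify $\rank(R_0)=q-r$ and $\row(\bar F)=\row(\bar F_0)$. Since $\row(F)\subseteq\row(\bar F_0)$, we have $\row(\bar F_0)=\row(F)\oplus\big(\row(\bar F_0)\cap\row(F)^{\perp}\big)$. The map $v\mapsto (I-P_F)v$, applied to the rows, sends $\row(\bar F_0)$ onto $\row(R_0)\subseteq\row(F)^\perp$ with kernel $\row(F)$. Hence $\dim\row(R_0)=q-r$. Moreover, every row of $\bar F_0$ equals its $P_F$-part, which lies in $\row(F)$, plus its $(I-P_F)$-part, which lies in $\row(R)$. This gives $\row(\bar F_0)\subseteq\row(F)+\row(R)$. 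Conversely, $\row(R)\subseteq\row(R_0)\subseteq\row(\bar F_0)$, because $P_F\bar F_0$ has rows in $\row(F)\subseteq\row(\bar F_0)$. Since $\row(R)\perp\row(F)$ and both $F$ and $R$ have full row rank, $\bar F$ has full row rank $q$.

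Second, I show that $\row(\bar F_0)$ is $A$-invariant in the sense that $\row(\bar F_0 A)\subseteq\row(\bar F_0)$. The rows of $\bar F_0A$ are the rows of $FA,\dots,FA^{n}$. By Cayley--Hamilton, $A^n$ is a linear combination of $I,\dots,A^{n-1}$, so $FA^n$ has rows in $\row(\bar F_0)$. Hence $\row(\bar FA)\subseteq\row(\bar F)$, which is precisely $\rank\begin{pmatrix}\bar FA\\ \bar F\end{pmatrix}=\rank(\bar F)$.

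Third, I substitute this into Corollary~\ref{cor2}. Condition (i) of Corollary~\ref{cor2} always holds, so the equivalence reduces to conditions (ii)--(iii), namely $\bar FA\bar F^-$ Hurwitz and the rank condition with $\bar\Upsilon$. I would also note that $\bar FA\bar F^-$ is well defined independently of the choice of right inverse: by the invariance, $\bar FA=N\bar F$ for a unique $N$, and $\bar FA\bar F^-=N\bar F\bar F^-=N$.

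The main obstacle is only the careful linear algebra of the first step, namely the direct-sum decomposition showing that the projected rows complete $F$ to a basis of $\row(\bar F_0)$ without rank loss. Everything else is a direct reduction to Corollary~\ref{cor2}.
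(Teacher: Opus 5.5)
Your proposal is correct and follows the paper's proof. The paper likewise takes $\row(\bar F)=\row(\bar F_0)$ and uses Cayley--Hamilton to get $\row(\bar FA)\subseteq\row(\bar F)$, so the invariance condition of Corollary~\ref{cor1} with $F$ replaced by $\bar F$ (i.e.\ Corollary~\ref{cor2}) holds automatically, and the remaining two conditions are read off. Your addition is a real proof of $\rank(R_0)=q-r$ and $\row(\bar F)=\row(\bar F_0)$, which the paper only asserts ``by construction.'' One notational caveat: the row-wise projection you actually use is $R_0=\bar F_0(I-P_F)$, and your ``$P_F\bar F_0$'' should read $\bar F_0P_F$. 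The paper's $(I-P_F)\bar F_0$ is not dimensionally defined for $r>1$, and for $r=1$ it does not project the rows.
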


\begin{proof}
	By construction, $\row(\bar F)=\row(\bar F_0)=\row\!\begin{pmatrix}F\\FA\\ \vdots\\FA^{n-1}\end{pmatrix}$.
	By the Cayley--Hamilton theorem, $FA^n$ is a linear combination of $F,FA,\dots,FA^{n-1}$, hence
	$\row(\bar F A)\subseteq \row(\bar F)$ and therefore
	\[
	\rank\!\begin{pmatrix}\bar F A\\ \bar F\end{pmatrix}=\rank(\bar F).
	\]
	Thus the invariance/rank condition (item (i) in Corollary~\ref{cor1} with $F$ replaced by $\bar F$)
	is automatically satisfied.
	
	Applying Corollary~\ref{cor1} to the augmented functional $z_{\mathrm{aug}}(t)=\bar F x(t)$
	(i.e., replacing $F$ by $\bar F$ throughout) yields that Structure-A of order $q$ achieves
	asymptotic estimation of $z_{\mathrm{aug}}(t)$ (and
	therefore asymptotic estimation of the original functional $z(t)$ as well) with decoupled error dynamics if and only if
	$\bar F A\bar F^{-}$ is Hurwitz and
	$\rank\!\begin{pmatrix}\bar\Upsilon\\ \Theta\end{pmatrix}=\rank(\Theta)$. This completes the proof.
\end{proof}

\medskip

Now suppose that, for $h=\tau$, the existence conditions
for Functional Observer Structure-A of order $r=\rank(F)$
according to Corollary~\ref{cor1}, and of order
$q=\rank(\bar F)$ according to Theorem~\ref{thm:aug-remove-i},
are not satisfied.

In this case, a higher–order functional observer (i.e., of order
greater than $q=\rank(\bar F)$) may still exist. Since the order of the observer
equals the number of functionals being estimated, increasing
the observer order corresponds to enlarging the admissible
functional class.
Specifically, we extend the instantaneous functionals
$\bar F x(t)$ to include {\it delay–dependent functionals}
\[
z_d(t)=F_d x(t-\tau),
\qquad
F_d \in \mathbb{R}^{\ell \times n}.
\]
Define the extended {\it delay state vector}
\[
\xi(t)
:=
\begin{pmatrix}
	x(t)\\
	x(t-\tau)
\end{pmatrix}
\in \mathbb{R}^{2n}.
\]
Let
\[
F_0 := 
\begin{pmatrix}
	\bar F\\
	\bf0
\end{pmatrix},
\qquad
F_\tau :=
\begin{pmatrix}
	\bf0\\
	F_d
\end{pmatrix},
\]
and define the extended functional compactly as
\[
z_{\mathrm{ext}}(t)
=
F_0 x(t) + F_\tau x(t-\tau)
=
\begin{pmatrix}
	F_0 & F_\tau
\end{pmatrix}
\xi(t).
\]
Denote
\[
F_{\mathrm{ext}}
:=
\begin{pmatrix}
	F_0 & F_\tau
\end{pmatrix}
\in \mathbb{R}^{s \times 2n},
\qquad
s := \rank(F_{\mathrm{ext}}).
\]
Thus, increasing the observer order corresponds to enlarging
the functional subspace from a subspace of $\mathbb{R}^n$
to a subspace of the {\it extended delay–state space}
$\mathbb{R}^n \oplus \mathbb{R}^n$.

\medskip

We now consider estimation of the extended functional
\[
z_{\mathrm{ext}}(t)=F_{\mathrm{ext}}\xi(t)
\]
using Functional Observer Structure-A when $h=\tau$.
Define the estimation error
\[
e_{\mathrm{ext}}(t)
=
\hat z_{\mathrm{ext}}(t)-z_{\mathrm{ext}}(t).
\]
The error dynamics can then be written as
\begin{IEEEeqnarray}{rcl}
	\dot e_{\mathrm{ext}}(t) &=& \dot{w}(t)+M\dot{y}(t)-F_0\dot{x}(t)-F_\tau\dot{x}(t-\tau) \nonumber\\
	&=&
	Ne_{\mathrm{ext}}(t)
	+\tilde{\mathcal L}_1 x(t)
	+\tilde{\mathcal L}_2 u(t)
	+\tilde{\mathcal L}_3 u(t-\tau)
	\nonumber\\
	&&
	+\tilde{\mathcal L}_4 x(t-\tau)
	+\tilde{\mathcal L}_5 x(t-2\tau)
	\label{eq:error_ext}
\end{IEEEeqnarray}
where
\medskip

\noindent $\tilde{\mathcal L}_1 = NF_0 - F_0A,\,\, \tilde{\mathcal L}_2 = J - F_0B, \,\, \tilde{\mathcal L}_3 = J_\tau - F_\tau B + MCB,$
$\tilde{\mathcal L}_4 = 	NF_\tau + GC - NMC + MCA - F_0A_\tau - F_\tau A,$\\
$\tilde{\mathcal L}_5 = G_\tau C + MCA_\tau - F_\tau A_\tau$\\
 and $\tilde{\mathcal L}
:=
\begin{pmatrix}
	\tilde{\mathcal L}_1 &
	\tilde{\mathcal L}_2 &
	\tilde{\mathcal L}_3 &
	\tilde{\mathcal L}_4 &
	\tilde{\mathcal L}_5
\end{pmatrix}.$
%
%

\medskip

We now present the following theorem characterizing the existence
of Functional Observer Structure-A of order
$s=\rank(F_{\mathrm{ext}})$.

\medskip

\begin{thm}\label{thm:xx_reduced}
	For $h=\tau$, Functional Observer Structure-A of order
	$s=\rank(F_{\mathrm{ext}})$ provides asymptotic estimation of
	$z_{\mathrm{ext}}(t)\in\mathbb{R}^{s}$ (and hence asymptotic estimation of $z(t)$ as well)
	with estimation error dynamics decoupled from the plant state
	if and only if the following conditions hold:
	\begin{itemize}
		\item[(i)]
		\[
		\rank
		\begin{pmatrix}
			\tilde\Upsilon\\
			\tilde\Theta
		\end{pmatrix}
		=
		\rank(\tilde\Theta)
		\]
		\item[(ii)] \begin{IEEEeqnarray}{rcl}
	&& \rank \begin{pmatrix}
		\lambda F_0 - F_0A & \lambda F_\tau - F_0A_\tau - F_\tau A &-F_\tau A_\tau \\
		\bf0 &CA &CA_\tau\\
		\bf0 &C &\bf0\\
		\bf0 &\bf0 &C
	\end{pmatrix} \nonumber \\
	&& = \rank \begin{pmatrix}
		F_0 &F_\tau &\bf0\\
		\bf0 &C &\bf0\\
		\bf0 &\bf0 &C\\
		\bf0 &CA &CA_\tau
	\end{pmatrix}, \quad \lambda\in\mathbb{C},\ \Re(\lambda)\ge 0 \nonumber 
\end{IEEEeqnarray}
equivalently 
the pair $(\tilde N_1,\tilde N_2)$ is detectable.
\end{itemize}
Here
\[
\tilde\Theta
=
\begin{pmatrix}
F_0 & F_\tau & \bf0\\
\bf0 & C & \bf0\\
\bf0 & \bf0 & C\\
\bf0 & CA & CA_\tau
\end{pmatrix},\]
\[\tilde\Upsilon
=
\begin{pmatrix}
F_0A & F_0A_\tau + F_\tau A & F_\tau A_\tau
\end{pmatrix},
\]
and
\[
\tilde N_1
=
\tilde\Upsilon\,\tilde\Theta^{-}\,\tilde{\mathcal I}_1,
\quad
\tilde N_2
=
\big(I_{s+3p}-\tilde\Theta\tilde\Theta^{-}\big)\tilde{\mathcal I}_1,
\]
where $\tilde{\mathcal I}_1$ denotes the first $s$ columns of $I_{s+3p}$.

	In this case, there exists a matrix $Z$ such that
	\[
	N=\tilde N_1+Z\tilde N_2
	\]
	is Hurwitz, and the estimation error satisfies
	\[
	e_{\mathrm{ext}}(t)\to \bf0
	\quad \text{as \textit t}\to\infty
	\]
	for all admissible initial conditions and inputs $u(\cdot)$.
\end{thm}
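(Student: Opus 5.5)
The plan is to follow the pattern of Theorems~\ref{thm:decoupleA} and~\ref{thm:decouple_htau}, applied now to the error equation \eqref{eq:error_ext}. Throughout, $F_{\mathrm{ext}}$ is taken to have full row rank $s$; redundant rows can be removed as was done for $F$.

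First, I will repeat the argument of Theorem~\ref{thm:decoupleA} word for word for \eqref{eq:error_ext}.
\begin{itemize}
\item If $\tilde{\mathcal L}=\mathbf 0$, then $\dot e_{\mathrm{ext}}=Ne_{\mathrm{ext}}$.
\item Decoupling from $x(t)$, $x(t-\tau)$, $x(t-2\tau)$ forces $\tilde{\mathcal L}_1=\tilde{\mathcal L}_4=\tilde{\mathcal L}_5=\mathbf 0$.
\item Convergence for every input forces $\tilde{\mathcal L}_2=\tilde{\mathcal L}_3=\mathbf 0$.
\end{itemize}
Hence the required property holds if and only if $\tilde{\mathcal L}=\mathbf 0$ and $N$ is Hurwitz. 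The conditions $\tilde{\mathcal L}_2=\tilde{\mathcal L}_3=\mathbf 0$ are always met by the choices $J=F_0B$ and $J_\tau=F_\tau B-MCB$. So everything reduces to the three equations $\tilde{\mathcal L}_1=\tilde{\mathcal L}_4=\tilde{\mathcal L}_5=\mathbf 0$, together with stability of $N$.

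The key step is to write these three equations as one linear matrix equation in which $N$ appears as an unknown block. Introduce $\bar G:=G-NM$ and
\[
X:=\begin{pmatrix} N & \bar G & G_\tau & M\end{pmatrix}\in\mathbb R^{s\times(s+3p)} .
\]
Computing $X\tilde\Theta$ block column by block column gives
\[
\begin{pmatrix} NF_0 & NF_\tau+\bar GC+MCA & G_\tau C+MCA_\tau\end{pmatrix}.
\]
Setting this equal to $\tilde\Upsilon$ reproduces exactly $\tilde{\mathcal L}_1=\mathbf 0$, $\tilde{\mathcal L}_4=\mathbf 0$ and $\tilde{\mathcal L}_5=\mathbf 0$. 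Conversely, any solution $X$ gives back $G=\bar G+NM$.

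By Lemma~\ref{lem1}, the equation $X\tilde\Theta=\tilde\Upsilon$ is solvable if and only if condition (i) holds, and its general solution is
\[
X=\tilde\Upsilon\tilde\Theta^{-}+Z\bigl(I-\tilde\Theta\tilde\Theta^{-}\bigr)
\]
with $Z$ arbitrary. Extracting the first $s$ columns gives $N=X\tilde{\mathcal I}_1=\tilde N_1+Z\tilde N_2$. Every solution has this form, and every $Z$ yields a solution. Therefore a Hurwitz $N$ compatible with $\tilde{\mathcal L}=\mathbf 0$ exists if and only if (i) holds and the pair $(\tilde N_1,\tilde N_2)$ is detectable, i.e.\ $\tilde N_1+Z\tilde N_2$ can be made Hurwitz by output injection.

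The remaining task is to identify this detectability with the rank test in (ii), which I will do via Lemma~\ref{lem1c}. The substitution is
\[
\Psi=\begin{pmatrix}F_0 & F_\tau & \mathbf 0\end{pmatrix},\qquad
\Lambda=\tilde\Upsilon,\qquad
\Phi=\begin{pmatrix}\mathbf 0 & C & \mathbf 0\\ \mathbf 0 & \mathbf 0 & C\\ \mathbf 0 & CA & CA_\tau\end{pmatrix}.
\]
With this choice:
\begin{itemize}
\item $\Psi$ has full row rank $\ell=s$, by the assumption on $F_{\mathrm{ext}}$.
\item $\begin{pmatrix}\Psi\\ \Phi\end{pmatrix}=\tilde\Theta$.
\item The hypothesis of Lemma~\ref{lem1c} is precisely condition (i).
\item The matrices $N_1,N_2$ of Lemma~\ref{lem1c} coincide with $\tilde N_1,\tilde N_2$.
\item The matrix $\begin{pmatrix}\lambda\Psi-\Lambda\\ \Phi\end{pmatrix}$ is the left-hand matrix in (ii) after a permutation of block rows, which does not change the rank.
\end{itemize}
Lemma~\ref{lem1c} then gives the stated equivalence for $\Re(\lambda)\ge0$. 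Combined with the first paragraph, this yields both directions of the theorem and the existence of $Z$ making $N=\tilde N_1+Z\tilde N_2$ Hurwitz.

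I expect the main obstacle to be the bookkeeping in the middle step: checking that the block ordering of $X$ and $\tilde\Theta$ reproduces $\tilde{\mathcal L}_4$ exactly, including the $-NMC$ term absorbed into $\bar G$. A second point to watch is that the free parameter $Z$ in Lemma~\ref{lem1} is unrestricted, so that $Z\tilde N_2$ genuinely ranges over all output injections. The necessity argument inherits the same implicit premise as Theorem~\ref{thm:decoupleA}: decoupling is read as the vanishing of the coefficients of $x(t)$, $x(t-\tau)$ and $x(t-2\tau)$.
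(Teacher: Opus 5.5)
Your proposal is correct and follows the paper's proof essentially step for step. You reduce to $\tilde{\mathcal L}=\mathbf 0$ with $N$ Hurwitz as in Theorem~\ref{thm:decoupleA}, recast this as $X\tilde\Theta=\tilde\Upsilon$ with $X=\begin{pmatrix}N & \bar G & G_\tau & M\end{pmatrix}$, extract $N=\tilde N_1+Z\tilde N_2$ from the general solution in Lemma~\ref{lem1}, and invoke Lemma~\ref{lem1c} with the same $\Lambda,\Psi,\Phi$. The details you add — the full-row-rank assumption on $F_{\mathrm{ext}}$ that $\Psi$ needs, the block-by-block check of $X\tilde\Theta$, and the block-row permutation in (ii) — are exactly what the paper leaves implicit.
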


\medskip

\begin{proof}
	Following the same construction as in Theorem~\ref{thm:decoupleA}, 
	for $h=\tau$, Functional Observer Structure-A of order 
	$s=\rank(F_{\mathrm{ext}})$ provides asymptotic estimation of 
	$z_{\mathrm{ext}}(t)$ with error dynamics decoupled from the plant state 
	if and only if $\tilde{\mathcal L}=\bf0$ and the matrix $N$ is Hurwitz.
	
	The condition $\tilde{\mathcal L}=\bf0$ is equivalent to the linear matrix equation
	\begin{equation}\label{eq22ty}
		X\tilde\Theta=\tilde\Upsilon,
	\end{equation}
	where
	\begin{IEEEeqnarray}{rcl}
	X :=
	\begin{pmatrix}
		N & \bar G & G_\tau & M
	\end{pmatrix}
	\in\mathbb{R}^{s\times (s+3p)},
	\quad
	\bar G := G-NM . \nonumber 
	\end{IEEEeqnarray}
	
	By Lemma~\ref{lem1}, the matrix equation \eqref{eq22ty} is solvable 
	if and only if
	\[
	\rank\!\begin{pmatrix}\tilde\Upsilon\\ \tilde\Theta\end{pmatrix}
	=\rank(\tilde\Theta),
	\]
	which establishes item~(i).
	
	Whenever this condition holds, the general solution of \eqref{eq22ty} according to Lemma~\ref{lem1} is
	\begin{equation}\label{eq:X-general}
		X=\tilde\Upsilon\tilde\Theta^{-}
		+Z\big(I_{s+3p}-\tilde\Theta\tilde\Theta^{-}\big),
	\end{equation}
	where $Z\in\mathbb{R}^{s\times (s+3p)}$ is arbitrary.
	
	Let $\tilde{\mathcal I}_1$ denote the first $s$ columns of $I_{s+3p}$.
	Right-multiplying \eqref{eq:X-general} by $\tilde{\mathcal I}_1$ yields
	\[
	N = X\tilde{\mathcal I}_1
	= \tilde\Upsilon\tilde\Theta^{-}\tilde{\mathcal I}_1
	+ Z\big(I_{s+3p}-\tilde\Theta\tilde\Theta^{-}\big)\tilde{\mathcal I}_1
	=: \tilde N_1 + Z\tilde N_2.
	\]
	Hence, under $\tilde{\mathcal L}=\bf0$, the error dynamics reduce to
	\[
	\dot e_{\mathrm{ext}}(t)
	=
	\big(\tilde N_1+Z\tilde N_2\big)e_{\mathrm{ext}}(t).
	\]
	Therefore, asymptotic convergence
	$e_{\mathrm{ext}}(t)\to \bf0$ as $t\to\infty$
	is achievable if and only if there exists a matrix $Z$
	such that $\tilde N_1+Z\tilde N_2$ is Hurwitz.
	This is equivalent to detectability of the pair
	$(\tilde N_1,\tilde N_2)$. Now by letting
	\begin{IEEEeqnarray}{rcl}
		\Lambda &=& \begin{pmatrix}
			F_0A &F_0A_\tau+F_\tau A &F_\tau A_\tau
		\end{pmatrix} \nonumber\\
		\Psi &=& \begin{pmatrix}
			F_0 &F_\tau &\bf0
		\end{pmatrix} \nonumber\\
		\Phi &=& \begin{pmatrix}
			\bf0 &C &\bf0\\
			\bf0 &\bf0 &C\\
			\bf0 &CA &CA_\tau
		\end{pmatrix} \nonumber 
	\end{IEEEeqnarray}
	in Lemma~\ref{lem1c}, we obtain item (ii) which completes the proof.
\end{proof}

\medskip

\begin{remark}
	All observer parameters are embedded in
	\[
	X=\begin{pmatrix}N & \bar G & G_{\tau} & M\end{pmatrix}.
	\]
	Once condition~(i) is satisfied, the general solution of the
	decoupling equation \eqref{eq22ty} yields the parameterization
	\[
	N=\tilde N_1+Z\tilde N_2,
	\]
	where $Z$ is arbitrary.
	Selecting $Z$ such that $\tilde N_1+Z\tilde N_2$ is Hurwitz
	ensures asymptotic convergence of the estimation error $e_{\mathrm{ext}}(t)$.
	
	The matrices $N$, $\bar G$, $G_{\tau}$, and $M$
	are then obtained directly from the corresponding blocks of $X$,
	and the observer gains are recovered from
	\[
	G=\bar G+NM,
	\]
and \[J=F_0B, \ J_\tau=F_\tau B - MCB. \]	Hence the theorem provides a constructive and parameterized
	design procedure.
\end{remark}

Now suppose that, for $h=\tau$, none of the existence conditions derived for
Functional Observer Structure-A are satisfied, including the minimal-order case
of order $r=\rank(F)$ (Corollary~\ref{cor1}), the augmented instantaneous case
of order $q=\rank(\bar F)$ (Theorem~\ref{thm:aug-remove-i}), and the extended
delay-state case of order $s=\rank(F_{\mathrm{ext}})$
(Theorem~\ref{thm:xx_reduced}). Recall that Structure-A possesses no internal
delay dynamics in its observer state. In this situation, one may instead
consider a functional observer with internal delay dynamics, namely
Functional Observer Structure-B.

Recall the extended delay state vector
\[
\xi(t)
:=
\begin{pmatrix}
	x(t)\\
	x(t-\tau)
\end{pmatrix}
\in \mathbb{R}^{2n},
\]
and define the {\it generalized functional}
\[
h(t)
=
H_0 x(t) + H_\tau x(t-\tau)
=
\begin{pmatrix}
	H_0 & H_\tau
\end{pmatrix}
\xi(t).
\]
Since the order of a functional observer equals the dimension
of the functional being estimated, the previous cases can be
interpreted as special choices of $(H_0,H_\tau)$:

\begin{itemize}
	\item[(i)]
	If $H_0 = F$ and $H_\tau = \bf 0$, then $h(t)=z(t)$ and the observer
	order is $r=\rank(F)$.
	
	\item[(ii)]
	If $H_0 = \bar F=\begin{pmatrix}F\\R\end{pmatrix}$ and
	$H_\tau = \bf0$, then
	\[
	h(t)=\begin{pmatrix}z(t)\\ z_a(t)\end{pmatrix},
	\]
	and the observer order is $q=\rank(\bar F)$.
	
	\item[(iii)]
	If
	\[
	H_0 = 
	\begin{pmatrix}
		\bar F\\
		\bf0
	\end{pmatrix},
	\qquad
	H_\tau =
	\begin{pmatrix}
		\bf0\\
		F_d
	\end{pmatrix},
	\]
	then
	\[
	h(t)=\begin{pmatrix}
		z(t)\\
		z_a(t)\\
		z_d(t)
	\end{pmatrix},
	\]
	and the observer order is
	\[
	s=\rank\!\begin{pmatrix}H_0 & H_\tau\end{pmatrix}.
	\]
\end{itemize}

We now present a general analysis that characterizes the existence
of Functional Observer Structure-B for arbitrary choices of
$(H_0,H_\tau)$, thereby unifying the cases of orders $r$, $q$ and $s$.

Consider the estimation of \(
h(t)
=
H_0 x(t) + H_\tau x(t-\tau)\) using
Functional Observer Structure-B when $h=\tau$.
Defining the estimation error $e(t)=\hat h(t)-h(t)$, the error dynamics are given by
\begin{align}
	\label{7ty2}
	\dot{e}(t)&=\dot{w}(t)+M\dot{y}(t)-H_0\dot{x}(t) - H_\tau\dot{x}(t-\tau)\nonumber\\ \quad &=Ne(t)+N_{\tau}e(t-\tau)+\hat{\mathcal{L}}_{1}x(t)+\hat{\mathcal{L}}_{2}u(t) +\hat{\mathcal{L}}_{3}u(t-\tau)\nonumber\\ \quad &+\hat{\mathcal{L}}_{4}x(t-\tau)+\hat{\mathcal{L}}_{5}x(t-2\tau),
\end{align}
where
\medskip

\noindent $\hat{\mathcal{L}}_1 = NH_0-H_0A,\,\, \hat{\mathcal{L}}_2 = J-H_0B,\,\,\hat{\mathcal{L}}_3 = J_{\tau}-H_\tau B+MCB,$\\
$\hat{\mathcal{L}}_4 = NH_\tau + N_\tau H_0+GC-NMC+MCA-H_0A_\tau-H_\tau A,$\\
$\hat{\mathcal{L}}_5 = N_{\tau}H_\tau + G_\tau C-N_\tau MC+MCA_\tau-H_\tau A_\tau,$\\
and 
$
\hat{\mathcal L}
:=
\begin{pmatrix}
	\hat{\mathcal L}_1 &
	\hat{\mathcal L}_2 &
	\hat{\mathcal L}_3 &
	\hat{\mathcal L}_4 &
	\hat{\mathcal L}_5
\end{pmatrix}.$

\medskip

%

The following theorem characterizes a sufficient
condition for the existence of a Functional Observer Structure-B
of orders $r := \rank(F)$, $q=\rank(\bar F)$ and $s = \rank \begin{pmatrix}
	F_0 &F_\tau
\end{pmatrix}$ for specific coefficients $$\begin{pmatrix}
H_0 &H_\tau
\end{pmatrix} = \begin{pmatrix}
F &\bf0
\end{pmatrix}$$ 
and
$$\begin{pmatrix}
H_0 &H_\tau
\end{pmatrix} = \begin{pmatrix}
\bar F &\bf0
\end{pmatrix}$$ and $$\begin{pmatrix}
H_0 &H_\tau
\end{pmatrix} = \begin{pmatrix}
F_0 &F_\tau
\end{pmatrix} := \begin{pmatrix}
\bar F &\bf0\\\bf0 &F_d
\end{pmatrix} $$ respectively.

To derive the existence conditions for Functional Observer
Structure-B, we impose the decoupling condition
\[
\hat{\mathcal L}=\mathbf 0
\]
on the estimation-error dynamics. This ensures that the error
system is independent of the plant state and input.

Under this condition, the resulting constraints can be written as
the linear matrix equation
\begin{equation}\label{eq:XB}
	X\hat{\Theta}=\hat{\Upsilon},
\end{equation}
where
\begin{IEEEeqnarray}{rcl}
X &=&
\begin{pmatrix}
	N & N_\tau & \bar G & \bar G_\tau & M
\end{pmatrix},
\nonumber \\
\bar G &:=& G-NM,\nonumber \\
\bar G_\tau &:=& G_\tau-N_\tau M, \nonumber 
\end{IEEEeqnarray}
and
\begin{IEEEeqnarray}{rcl}
	\hat{\Theta}
	&=&
	\begin{pmatrix}
		H_0 & H_\tau & \mathbf 0\\
		\mathbf 0 & H_0 & H_\tau\\
		\mathbf 0 & C & \mathbf 0\\
		\mathbf 0 & \mathbf 0 & C\\
		\mathbf 0 & CA & CA_\tau
	\end{pmatrix},
	\nonumber\\
	\hat{\Upsilon}
	&=&
	\begin{pmatrix}
		H_0A & H_0A_\tau + H_\tau A & H_\tau A_\tau
	\end{pmatrix}.
	\nonumber
\end{IEEEeqnarray}

By Lemma~\ref{lem1}, the matrix equation \eqref{eq:XB} is solvable
if and only if
\[
\rank\!\begin{pmatrix}
	\hat{\Upsilon}\\
	\hat{\Theta}
\end{pmatrix}
=
\rank(\hat{\Theta}).
\]
Whenever this condition holds, the general solution is
\[
X
=
\hat{\Upsilon}\hat{\Theta}^{-}
+
Z\big(I_{2s+3p}-\hat{\Theta}\hat{\Theta}^{-}\big),
\]
where
\[
Z\in\mathbb R^{s\times(2s+3p)}
\]
is arbitrary.

Partitioning \(X\) yields
\[
N=\hat N_1+Z\hat N_2,
\qquad
N_\tau=\hat N_{\tau_1}+Z\hat N_{\tau_2},
\]
where
\[
\hat N_1
=
\hat{\Upsilon}\hat{\Theta}^{-}\hat{\mathcal I}_1,
\qquad
\hat N_2
=
\big(I_{2s+3p}-\hat{\Theta}\hat{\Theta}^{-}\big)\hat{\mathcal I}_1,
\]
\[
\hat N_{\tau_1}
=
\hat{\Upsilon}\hat{\Theta}^{-}\hat{\mathcal I}_2,
\qquad
\hat N_{\tau_2}
=
\big(I_{2s+3p}-\hat{\Theta}\hat{\Theta}^{-}\big)\hat{\mathcal I}_2,
\]
with
\[
\hat{\mathcal I}_1=
\begin{pmatrix}
	I_s\\
	\mathbf 0_{(s+3p)\times s}
\end{pmatrix},
\qquad
\hat{\mathcal I}_2=
\begin{pmatrix}
	\mathbf 0_{s\times s}\\
	I_s\\
	\mathbf 0_{3p\times s}
\end{pmatrix}.
\]

\medskip

\begin{thm}\label{thm:FOB_structB}
	Suppose that $h=\tau$. Consider Functional Observer
	Structure-B (with internal delay) of order
	\[
	s=\rank\!\begin{pmatrix}H_0 & H_\tau\end{pmatrix}.
	\]
	
	Then Structure-B achieves asymptotic estimation of
	$h(t)\in\mathbb R^s$ (and hence of $z(t)$),
	with estimation-error dynamics decoupled from the plant state,
	if the following conditions hold.
	
	\begin{itemize}
		
		\item[(i)]
		The rank condition
		\[
		\rank\!\begin{pmatrix}
			\hat{\Upsilon}\\
			\hat{\Theta}
		\end{pmatrix}
		=
		\rank(\hat{\Theta})
		\]
		is satisfied.
		
		\item[(ii)]
		There exists a matrix \(Z\) such that the delay-dependent error dynamics
		\[
		\dot e(t)
		=
		(\hat N_1+Z\hat N_2)e(t)
		+
		(\hat N_{\tau_1}+Z\hat N_{\tau_2})e(t-\tau)
		\]
		is asymptotically stable.
		
		A sufficient condition for this stability, according to
		Lemma~\ref{lem3}, is that the LMI \eqref{eq:LMI2} in
		Lemma~\ref{lem3} is feasible.
		
	\end{itemize}
	In this case the estimation error satisfies
	\[
	e(t)\to \mathbf 0
	\qquad \text{as } t\to\infty
	\]
	for all admissible initial conditions and inputs \(u(\cdot)\).
	
\end{thm}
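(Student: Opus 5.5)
The argument is a sufficiency proof that follows the same route as Theorem~\ref{thm:xx_reduced}, adapted to the extra delay channel. First I will verify that the error dynamics \eqref{7ty2} are correct. Differentiating $\hat h(t)=w(t)+My(t)$ and using $y(t)=Cx(t-\tau)$ gives $\dot y(t)=CAx(t-\tau)+CA_\tau x(t-2\tau)+CBu(t-\tau)$. I will then substitute $w(t)=e(t)+h(t)-My(t)$ and $w(t-\tau)=e(t-\tau)+h(t-\tau)-My(t-\tau)$ into the Structure-B equation. Collecting the coefficients of $x(t)$, $x(t-\tau)$, $x(t-2\tau)$, $u(t)$ and $u(t-\tau)$ should reproduce $\hat{\mathcal L}_1,\dots,\hat{\mathcal L}_5$. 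The point to check is that the $N_\tau$ terms contribute $N_\tau H_0$ to $\hat{\mathcal L}_4$ and $N_\tau H_\tau - N_\tau MC$ to $\hat{\mathcal L}_5$.

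Next I will show that the three state-coefficient conditions $\hat{\mathcal L}_1=\hat{\mathcal L}_4=\hat{\mathcal L}_5=\mathbf 0$ are exactly the equation $X\hat\Theta=\hat\Upsilon$ in \eqref{eq:XB}. Here $X=(N\ N_\tau\ \bar G\ \bar G_\tau\ M)$, with $\bar G=G-NM$ and $\bar G_\tau=G_\tau-N_\tau M$. The check is column block by column block:
\begin{itemize}
\item the first block gives $NH_0=H_0A$;
\item the second gives $NH_\tau+N_\tau H_0+\bar G C+MCA=H_0A_\tau+H_\tau A$;
\item the third gives $N_\tau H_\tau+\bar G_\tau C+MCA_\tau=H_\tau A_\tau$.
\end{itemize}
These coincide with the rows of $\hat\Theta$ and $\hat\Upsilon$. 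Under item (i), Lemma~\ref{lem1} guarantees a solution, and its general form is $X=\hat\Upsilon\hat\Theta^-+Z(I-\hat\Theta\hat\Theta^-)$. Extracting the first two column blocks with $\hat{\mathcal I}_1$ and $\hat{\mathcal I}_2$ yields the stated parametrization of $N$ and $N_\tau$. The gains $G=\bar G+NM$, $G_\tau=\bar G_\tau+N_\tau M$ and $M$ are read off from the remaining blocks. The input terms $\hat{\mathcal L}_2$ and $\hat{\mathcal L}_3$ vanish under the free choices $J=H_0B$ and $J_\tau=H_\tau B-MCB$.

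With $\hat{\mathcal L}=\mathbf 0$, the error system becomes the autonomous delay system $\dot e(t)=(\hat N_1+Z\hat N_2)e(t)+(\hat N_{\tau_1}+Z\hat N_{\tau_2})e(t-\tau)$. It is independent of $x(\cdot)$ and $u(\cdot)$. Item (ii) says this system is asymptotically stable for some $Z$, so $e(t)\to\mathbf 0$ for every initial function $\rho$ and $\phi$ and for every input. For the LMI route I will apply Lemma~\ref{lem3}, identifying its $K$ with $Z$ and its $(N_1,N_{\tau_1},N_2,N_{\tau_2})$ with the hatted matrices. The dimensions need care: $N_2$ there has $m$ rows, which here equals $2s+3p$, and $n$ becomes $s$. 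The lemma then returns $Z=\tilde M^{-1}\tilde G$.

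Finally, estimating $h(t)$ yields $z(t)$: in each of the cases (i)–(iii), $z(t)$ is the first block of $h(t)$. I expect the main obstacle to be the bookkeeping in the error derivation, especially the delayed substitution of $w(t-\tau)$, which produces the $-N_\tau MC$ term absorbed into $\bar G_\tau$. The statement is only a sufficiency claim, so no converse needs to be argued.
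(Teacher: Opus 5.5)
Your proposal is correct and follows the same route as the paper's proof. You impose $\hat{\mathcal L}=\mathbf 0$, solve $X\hat\Theta=\hat\Upsilon$ via Lemma~\ref{lem1}, read off $N=\hat N_1+Z\hat N_2$ and $N_\tau=\hat N_{\tau_1}+Z\hat N_{\tau_2}$ from the general solution, and invoke Lemma~\ref{lem3} with $K=Z$ for the stability part. Your extra checks are also correct and only make explicit what the paper leaves to the surrounding text and the subsequent remark: the block-by-block match of $X\hat\Theta=\hat\Upsilon$ with $\hat{\mathcal L}_1=\hat{\mathcal L}_4=\hat{\mathcal L}_5=\mathbf 0$, the choices $J=H_0B$ and $J_\tau=H_\tau B-MCB$, and the dimension identification $n\to s$, $m\to 2s+3p$ in Lemma~\ref{lem3}.
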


\medskip

\begin{proof}
	Under the decoupling condition $\hat{\mathcal L}=\mathbf 0$, the
	error equation \eqref{7ty2} reduces to
	\[
	\dot e(t)=Ne(t)+N_\tau e(t-\tau),
	\]
	so that the estimation error dynamics are decoupled from the plant
	state $x(\cdot)$ and the input $u(\cdot)$.
	
	By Lemma~\ref{lem1}, the linear matrix equation
	\[
	X\hat{\Theta}=\hat{\Upsilon}
	\]
	is solvable if and only if
	\[
	\rank\!\begin{pmatrix}
		\hat{\Upsilon}\\
		\hat{\Theta}
	\end{pmatrix}
	=
	\rank(\hat{\Theta}),
	\]
	which establishes item~(i).
	
	Whenever this condition holds, the general solution yields
	\[
	N=\hat N_1+Z\hat N_2,
	\qquad
	N_\tau=\hat N_{\tau_1}+Z\hat N_{\tau_2},
	\]
	and hence the error dynamics takes the form
	\[
	\dot e(t)
	=
	\big(\hat N_1+Z\hat N_2\big)e(t)
	+
	\big(\hat N_{\tau_1}+Z\hat N_{\tau_2}\big)e(t-\tau).
	\]
	By Lemma~\ref{lem3}, a sufficient condition for asymptotic stability
	of this delay system is feasibility of the LMI in Lemma~\ref{lem3}.
	This establishes item~(ii), which completes the proof.
\end{proof}

\medskip

\begin{remark}[Observer parameter construction]
	Condition~(i) guarantees solvability of the decoupling equation
	\[
	X\hat\Theta=\hat\Upsilon,
	\]
	whose general solution is
	\[
	X
	=
	\hat{\Upsilon}\hat{\Theta}^{-}
	+
	Z\big(I_{2s+3p}-\hat{\Theta}\hat{\Theta}^{-}\big),
	\]
	where $Z\in\mathbb{R}^{s\times(2s+3p)}$ is arbitrary.
	
	Thus, all observer parameters are embedded in
	\[
	X=
	\begin{pmatrix}
		N & N_\tau & \bar G & \bar G_\tau & M
	\end{pmatrix}.
	\]
	For any choice of $Z$, the matrices
	\[
	N=\hat N_1+Z\hat N_2,
	\qquad
	N_\tau=\hat N_{\tau_1}+Z\hat N_{\tau_2}
	\]
	are obtained directly by projection.
	
	The remaining observer matrices are then recovered from the
	corresponding blocks of $X$, with the observer gains given by
	\[
	G=\bar G+NM, \ G_{\tau}=\bar G_{\tau}+N_{\tau}M,
	\]
	\[
	J=H_0B,\ J_{\tau}=H_\tau B -MCB.\]
	Consequently, the design of Functional Observer Structure-B
	reduces to selecting a matrix $Z$ such that the delay system
	\[
	\dot e(t)
	=
	\big(\hat N_1+Z\hat N_2\big)e(t)
	+
	\big(\hat N_{\tau_1}+Z\hat N_{\tau_2}\big)e(t-\tau)
	\]
	is asymptotically stable.
	According to Lemma~\ref{lem3}, this can be accomplished
	by solving the LMI condition in Lemma~\ref{lem3} with $N_1=\hat N_1$, $N_2=\hat N_2$, $N_{\tau_1}=\hat N_{\tau_1}$ and $N_{\tau_2}=\hat N_{\tau_2}$.
	Hence the proposed conditions yield a fully constructive
	parameterization and synthesis procedure.
\end{remark}

\medskip

We now present a numerical example to illustrate the proposed existence conditions and observer synthesis procedures described in this section.

\medskip

\textit{Example 1:}
Consider the time-delay system with system matrices
\[
A=
\begin{pmatrix}
	-2 & 1\\
	0 & -3
\end{pmatrix},\quad
A_\tau=
\begin{pmatrix}
	-4 & 1\\
	2 & 1
\end{pmatrix},\quad
B=
\begin{pmatrix}
	1\\
	2
\end{pmatrix}.
\]
The delayed output measurement is given by
\[
y(t)=Cx(t-\tau),
\]
where $C=I_2$.

For $\tau=1\,\text{s}$, it can be verified using the approach in
\cite{wu} that the time-delay system has unstable eigenvalues
located at $0.2104 \pm j\,2.3888$.
We consider the following six cases to illustrate the results of this section.

\medskip

\textit{Case 1:}
We design a functional observer using Structure-A to estimate the
functional
\[
z(t)=
\begin{pmatrix}
	0 & 1
\end{pmatrix}x(t)=x_2(t).
\]
With
\[
F=
\begin{pmatrix}
	0 & 1
\end{pmatrix},
\]
and the matrix $A$ given above, Condition~(i) of Corollary~\ref{cor1}
is satisfied.
Furthermore,
\[
N=FAF^{-}=-3,
\]
which is Hurwitz, and therefore Condition~(ii) of
Corollary~\ref{cor1} is satisfied.
Condition~(iii) can also be verified to hold.

For simplicity, letting $Z=\mathbf 0$ and applying Lemma~\ref{lem1}, the matrix
\[
X=
\begin{pmatrix}
	\bar G & G_{\tau} & M
\end{pmatrix}
\]
is obtained and 
consequently, the parameters of a first-order Functional Observer
Structure-A are determined as follows
\[
M=
\begin{pmatrix}
	-0.3061 & -0.4041
\end{pmatrix},\quad
N=-3,
\]
\[
G=
\begin{pmatrix}
	2.3061 & 1.3061
\end{pmatrix},\quad
G_{\tau}=
\begin{pmatrix}
	-0.4163 & 0.7102
\end{pmatrix},
\]
\[
J=2,\qquad J_{\tau}=1.1143.
\]
In this example, the functional observer predicts the current value
of $x_2(t)$ using delayed measurements.
Hence, the observer effectively compensates for the measurement delay
and may therefore be interpreted as a \textit{time-delay compensator}.

\medskip

\textit{Case 2:} We design a Functional Observer Structure-A to estimate the following functional
$$z(t)=\begin{pmatrix}
	2 &1
\end{pmatrix}x(t).$$
Now, with $F=\begin{pmatrix}
	2 &1
\end{pmatrix}$, Condition (i) of Corollary \ref{cor1}  is not satisfied since $\rank\!\begin{pmatrix}FA\\ F\end{pmatrix}=2$ but  $\rank(F)=1$. Thus, a first-order observer of Structure-A does not exist. As presented in Corrollary \ref{cor2} and Theorem \ref{thm:aug-remove-i}, we can now readily find an extra functional, $z_a(t)=Rx(t)$, where 
$$R=\begin{pmatrix}
	0 &1
\end{pmatrix},$$ to satisfy Condition (i) of Corollary \ref{cor2}, where $\bar{F}=\begin{pmatrix}
	F\\R
\end{pmatrix}.$  With $\bar{F}$ as obtained, Condition (ii) of Corollary \ref{cor2} is also satisfied where $N=\bar{F}A\bar{F}^-=A=\begin{pmatrix}
	-2 &1\\0 &-3
\end{pmatrix}$. Here, the eigenvalues of $N$ are the same as the eigenvalues of $A$, which are stable at $\{-2,-3\}$.

It is easy to verify that Condition (iii) of Corollary \ref{cor2} with $\bar \Upsilon =
\begin{pmatrix}
	\bar FA_\tau & \bf0
\end{pmatrix}$ is satisfied. By letting $Z=\mathbf 0$ and from from Lemma~\ref{lem1}, $X$ is obtained. Thus, we obtain the following observer parameters of a second-order Functional Observer Structure-A

\begin{IEEEeqnarray}{rcl}
&& M=\begin{pmatrix} 0.5510 &-0.2327\\
	-0.3061 &-0.4041\end{pmatrix}, \quad N=\begin{pmatrix}
	-2 &1\\0 &-3
\end{pmatrix}, \nonumber \\
&& G=\begin{pmatrix} -6.3061 &1.8122\\2.3061 &1.3061\end{pmatrix}, \quad G_{\tau}=\begin{pmatrix} 2.6694 &-0.3184\\-0.4163 &0.7102\end{pmatrix}, \nonumber \\
&& J=\begin{pmatrix}4\\2\end{pmatrix}, \quad J_{\tau}=\begin{pmatrix}-0.0857\\
	1.11434\end{pmatrix}. \nonumber 
\end{IEEEeqnarray}

\medskip

\textit{Case 3:} We reconsider Case 2 but now we consider a more stringent situation where the output measurement contains only one delayed state variable $x_1(t-\tau)$, i.e., $y(t)= \begin{pmatrix} 1 &0\end{pmatrix}x(t-\tau)$. In this scenario, to estimate the instantaneous functional $z(t)$ using Observer Structure-A with only delayed measurement $x_1(t-\tau)$, Condition (iii) of Corollary \ref{cor2} is not satisfied since $\rank(\Theta)=3$ but $\rank\!\begin{pmatrix}\bar{\Upsilon}\\ \Theta \end{pmatrix}=4$.

As presented in Theorem \ref{thm:xx_reduced}, we now estimate the extended functional $z_{ext}(t)=F_0x(t)+F_{\tau}x(t-\tau)$, $F_0=\begin{pmatrix}\bar{F}\\ \bf 0 \end{pmatrix}=\begin{pmatrix}2 &1\\0 &1\\0 &0 \end{pmatrix}$ and $F_{\tau}=\begin{pmatrix}\bf0\\ F_d \end{pmatrix}=\begin{pmatrix}0 &0\\0 &0\\a &b \end{pmatrix}$, $a$ and $b$ are scalars to be determined such that Condition (i) of Theorem \ref{thm:xx_reduced} is satisfied.

With $F_0$ and $F_{\tau}$ as given above, we obtain
$$\tilde{\Theta}=\begin{pmatrix} 2 &1 &0 &0 &0 &0\\0 &1 &0 &0 &0 &0
	\\0 &0 &a &b &0 &0\\
	0 &0 &1 &0 &0 &0\\
	0 &0 &0 &0 &1 &0\\
	0 &0 &-2 &1 &-4 &1
\end{pmatrix}.$$
From the above, $b\neq 0$ results in $\mathrm{rank}(\tilde{\Theta})=6$ and hence Condition (i) of Theorem \ref{thm:xx_reduced} is satisfied. For illustrative purpose, we select $a=0$,  $b=1$, and thus the extra delayed functional that we estimate is $x_2(t-1)$, which is independent of the output measurement, $y(t)=x_1(t-1)$.

Next, from (ii) of Theorem \ref{thm:xx_reduced}, we obtain $\tilde{N}_1=\begin{pmatrix} -2 &1 &3\\
	0 &-3 &1\\
	0 &0 &-4\end{pmatrix}$ and $\tilde{N}_2=\bf0$. The pair $(\tilde{N}_1, \tilde{N}_2)$ is detectable, and $N=\tilde{N}_1$ is Hurwitz.  Thus, Condition (ii) of Theorem \ref{thm:xx_reduced} is satisfied. We obtain the rest of the observer parameters as follows

$M=\begin{pmatrix} 0\\0\\1\end{pmatrix}$, $G=\begin{pmatrix}
	-3\\3\\-2\end{pmatrix}$, $G_{\tau}=\begin{pmatrix}
	0\\0\\6\end{pmatrix}$, $J=\begin{pmatrix}
	4\\2\\0\end{pmatrix}$ and $J_{\tau}=\begin{pmatrix}
	0\\0\\1 \end{pmatrix}.$

\medskip

\textit{Case 4:} We reconsider Case 1 but now the output measurement is the same as in Case 3, i.e., $y(t)= \begin{pmatrix} 1 &0\end{pmatrix}x(t-\tau)$.

However, Condition (iii) of Corollary \ref{cor1} is not satisfied since $\rank(\Theta)=3$ but $\rank\!\begin{pmatrix}\Upsilon\\ \Theta \end{pmatrix}=4$. In this situation, based on Theorem \ref{thm:xx_reduced}, we estimate the following extended functional $z_{ext}(t)=F_0x(t)+F_{\tau}x(t-\tau)$, where $F_0=\begin{pmatrix}F\\ \bf 0 \end{pmatrix}=\begin{pmatrix}0 &1\\0 &0 \end{pmatrix}$ and $F_{\tau}=\begin{pmatrix}\bf0\\ F_d \end{pmatrix}=\begin{pmatrix}0 &0\\0 &1 \end{pmatrix}$. With the extra delayed functional, Conditions (i)-(ii) of Theorem \ref{thm:xx_reduced} are satisfied. Thus, a second-order Observer Structure-A to estimate $z(t)=x_2(t)$ is obtained, where\\
$N=\begin{pmatrix} -3 &1\\0 &-4\end{pmatrix}$, $M=\begin{pmatrix} 0\\1\end{pmatrix}$, $G=\begin{pmatrix}
	3\\-2\end{pmatrix}$, $G_{\tau}=\begin{pmatrix}
	0\\6\end{pmatrix}$, $J=\begin{pmatrix}
	2\\0\end{pmatrix}$ and $J_{\tau}=\begin{pmatrix}
	0\\1 \end{pmatrix}.$

\medskip

\textit{Case 5:} We reconsider Case 1 but now matrix $A$ is given as follows
$$A=\begin{pmatrix}
	-2 &1\\0 &0.5
\end{pmatrix}.$$

Note that Observer Structure-A does not exist since $N=FAF^-=0.5$, which is not Hurwitz. Thus, we need to employ Observer Structure-B to ensure asymptotic stability of the designed observer.

According to Theorem \ref{thm:FOB_structB}, we design a first-order Observer Structure-B to estimate the functional $h(t)=z(t)=x_2(t)$ by letting $H_0 = F$ and $H_\tau = \bf0$.

Condition (i) of Theorem \ref{thm:FOB_structB} is satisfied, and we obtain $\hat N_1
=0.5$, $\hat N_2
=\bf 0$, $\hat N_{\tau_1}
= 0.6558$ and $\hat N_{\tau_2}\in\mathbb{R}^{8\times 1}\neq \bf0.$ 
Since $\hat{N}_{\tau_2}\neq \bf0$ and $\hat{N}_{\tau_2}(2,1)=0.5501\neq 0$,  we can choose $Z\in\mathbb{R}^{1\times 8}$ as follows
\[Z=\begin{pmatrix}
	0 &a &0 &0 &0 &0 &0 &0\end{pmatrix}.\] 
Thus, as per item (ii) of Theorem \ref{thm:FOB_structB}, we now need to ensure asymptotic stability of the following error time-delay system
\[
\dot e(t)
=0.5e(t)+N_{\tau}e(t-\tau), \ N_{\tau}=0.6558+0.5501a.
\]
Since the above is a scalar time-delay system, we can use the exact stability condition presented in \cite{mori}, i.e., $0.5+N_{\tau}<0$ and $N_{\tau}\geq -\frac{1}{\tau}$, to find $N_{\tau}$ to ensure asymptotic stability.

Let say, with $\tau=1s$, we pick $N_{\tau}=-0.65$, and hence we obtain $a=-2.3736$.

The error time-delay system $\dot{e}(t)=0.5e(t)-0.65e(t-1)$ is asymptotically stable, and based on \cite{wu}, we can compute its dominant eigenvalues at $-0.4538 \pm j0.3706$. Thus, Observer Structure-B allows the time-delay error dynamics
to be stabilized even when $N$ is unstable, a case for which
a functional observer cannot be realized using Structure-A.

Finally, a first-order Observer Structure-B is obtained, where

\medskip

\noindent $M=\begin{pmatrix} -0.0919 &0.0365\end{pmatrix}, N=0.5, N_{\tau}=-0.65,$\\
$G=\begin{pmatrix} 1.7703 &1.7419\end{pmatrix}, G_{\tau}=\begin{pmatrix}-0.3807 &0.0317\end{pmatrix},$\\
$J=2, J_{\tau}= 0.0190$.

\medskip

\textit{Case 6:} We reconsider Case 5 but now we consider the output measurement contains only one delayed state variable $x_1(t-\tau)$, i.e., $y(t)= \begin{pmatrix} 1 &0\end{pmatrix}x(t-\tau)$. 

In this scenario, a first-order Observer Structure-B to estimate the functional $h(t)=z(t)=x_2(t)$ by letting $H_0 = F$ and $H_\tau = \bf0$ does not exist. Here, we obtain $\hat N_1
=0.5$, $\hat N_2
=\bf 0$, $\hat N_{\tau_1}
= 1$ and $\hat N_{\tau_2}= \bf0$. Since  $\hat N_{\tau_2}= \bf0$, Condition (ii) of Theorem \ref{thm:FOB_structB} is not satisfied as the below error time-delay is not stable
\[\dot{e}(t)=0.5e(t)+e(t-\tau).\]

According to Theorem \ref{thm:FOB_structB}, we now design a second-order Observer Structure-B to estimate the functional $h(t)=\begin{pmatrix}z(t)\\ z_a(t)\end{pmatrix}$ by selecting $H_0 = \bar F=\begin{pmatrix}F\\R\end{pmatrix}=\begin{pmatrix}0 &1\\1 &0\end{pmatrix}$ and $H_\tau = \bf0$.

Conditions (i)-(ii) of Theorem \ref{thm:FOB_structB} are now satisfied. For $\tau=1s$, LMI (\ref{eq:LMI2}) of Lemma~\ref{lem3} is feasible and we obtain 
\[Z=\begin{pmatrix}
	0 &0 &0 &-8.5064 &0 &0 &0\\
	0 &0 &0 &-0.4496 &0 &0 &0\end{pmatrix},\] 
and the following error time-delay system is asymptotically stable
\[
\dot e(t)
=\begin{pmatrix}0.5 &0\\1 &-2\end{pmatrix}e(t)+\begin{pmatrix} 1 &-3.2532\\1 &-2.2248\end{pmatrix}e(t-1).
\]
Thus, we obtain a second-order Observer Structure-B, where

\medskip

\noindent $M=\bf 0$, $N=\begin{pmatrix}0.5 &0\\1 &-2\end{pmatrix}$, $N_{\tau}=\begin{pmatrix} 1 &-3.2532\\1 &-2.2248\end{pmatrix}$,\\ $G=\begin{pmatrix} 5.2532\\-1.7752\end{pmatrix}$, $G_{\tau}=\bf 0$, $J=\begin{pmatrix} 2\\1\end{pmatrix}$ and $J_{\tau}= \bf 0$.

\section{Existence and Design of Functional Observers for the Case $h>\tau$}
We now address the case $h>\tau$, where the measurement delay exceeds the
intrinsic state delay and the plant and measurement delays are no longer aligned.
Introduce the augmented delay state vector
\[
\breve \xi(t)
:=
\begin{pmatrix}
	x(t)\\
	x(t-\tau)\\
	x(t-h)
\end{pmatrix}
\in \mathbb{R}^{3n},
\]
and define the generalized functional
\begin{IEEEeqnarray}{rcl}
	\breve h(t)
	&=&
	H_0 x(t) + H_\tau x(t-\tau)+H_hx(t-h) \nonumber \\
	&=&
	\begin{pmatrix}
		H_0 & H_\tau & H_h
	\end{pmatrix}
	\breve \xi(t). \nonumber
\end{IEEEeqnarray}
We employ a more general Functional Observer Structure-C to estimate
$\breve h(t)$. Once $\breve h(t)$ is estimated, the desired functional
$z(t)=Fx(t)$ can be recovered through appropriate choices of
$\begin{pmatrix} H_0 & H_\tau & H_h \end{pmatrix}$.
Moreover, different observer orders arise from the following selections:
\begin{IEEEeqnarray}{rcl}
	r &:=& \rank(F)
	\ \text{if} \
	(H_0\; H_\tau\; H_h)=\begin{pmatrix}F &\bf0 &\bf0\end{pmatrix}
	\nonumber\\[1mm]
	q &:=& \rank(\bar F)
	\ \text{if} \
	(H_0\; H_\tau\; H_h)=\begin{pmatrix}
		F &\bf0 &\bf0\\R &\bf0 &\bf0\\
	\end{pmatrix}=\begin{pmatrix}\bar F &\bf0 &\bf0\end{pmatrix}
	\nonumber\\[1mm]
\breve s &:=& \rank(F_0\; F_\tau\; F_h)
	\ \text{if} \
	(H_0\; H_\tau\; H_h)=
	\begin{pmatrix}
		\bar F & \bf0 & \bf0\\
		\bf0 & F_\tau & \bf0\\
		\bf0 & \bf0 & F_h
	\end{pmatrix}.
	\nonumber
\end{IEEEeqnarray}
Defining the estimation error vector $e(t)=\breve h(t)-h(t)$, the error dynamics are given by
\begin{align}
	\label{s5ty2}
	\dot{e}(t) &=\dot{w}(t)+M\dot{y}(t)+M_{\tau}\dot{y}(t-\tau)+M_h\dot{y}(t-h)\nonumber\\&-H_0\dot{x}(t) - H_\tau\dot{x}(t-\tau)- H_h\dot{x}(t-h) \nonumber \\
	&=Ne(t)+N_\tau e(t-\tau)+ N_he(t-h)+ \breve{\mathcal{L}}_{1}u(t)\nonumber\\ &+\breve{\mathcal{L}}_{2}u(t-\tau) +\breve{\mathcal{L}}_{3}u(t-h)+\breve{\mathcal{L}}_{4}u(t-2\tau)\nonumber\\&+\breve{\mathcal{L}}_{5}u(t-2h)+\breve{\mathcal{L}}_{6}u(t-\tau-h)+\breve{\mathcal{L}}_{7}x(t)\nonumber\\&+\breve{\mathcal{L}}_{8}x(t-\tau)+\breve{\mathcal{L}}_{9}x(t-h) +\breve{\mathcal{L}}_{10}x(t-2\tau)\nonumber\\&+\breve{\mathcal{L}}_{11}x(t-2h)+\breve{\mathcal{L}}_{12}x(t-\tau-h)+\breve{\mathcal{L}}_{13}x(t-2\tau-h)\nonumber\\&+\breve{\mathcal{L}}_{14}x(t-\tau-2h)+\breve{\mathcal{L}}_{15}x(t-3\tau)+\breve{\mathcal{L}}_{16}x(t-3h).
\end{align}
where
\begin{IEEEeqnarray}{rcl}
&&\breve{\mathcal{L}}_1 =J-H_0B, \quad  \breve{\mathcal{L}}_2 =J_\tau-H_\tau B+MC_{\tau}B, \nonumber \\
&&	\breve{\mathcal{L}}_3 =J_h-H_hB+MC_hB, \quad \breve{\mathcal{L}}_4 =J_{\tau \tau}+M_{\tau}C_{\tau}B, \nonumber \\
&& \breve{\mathcal{L}}_5 =J_{hh}+M_hC_hB, \quad  \breve{\mathcal{L}}_6=J_{\tau h}+M_\tau C_hB+M_hC_{\tau}B, \nonumber\\
&&\breve{\mathcal{L}}_7 =NH_0-H_0A, \nonumber\\
&&\breve{\mathcal{L}}_8 =NH_\tau+N_\tau H_0+\bar{G}C_{\tau}+MC_{\tau}A-H_0A_\tau-H_\tau A,  \nonumber \\
&& \breve{\mathcal{L}}_9 =NH_h+N_hH_0+\bar{G}C_h+MC_hA-H_hA, \nonumber\\
&&\breve{\mathcal{L}}_{10} =N_\tau H_\tau +\breve{G}_{\tau}C_\tau +MC_{\tau}A_{\tau}+ M_{\tau}C_{\tau}A-H_{\tau}A_{\tau}, \nonumber \\
&& \breve{\mathcal{L}}_{11} =N_hH_h+\breve{G}_hC_h+M_hC_hA, \nonumber \\
&& \breve{\mathcal{L}}_{12}=N_\tau H_h+N_hH_\tau+\breve{G}_\tau C_h+\breve{G}_h C_{\tau}+MC_hA_{\tau} \nonumber \\
&& \qquad +M_\tau C_hA+M_hC_{\tau}A-H_hA_\tau, \nonumber\\
&&  \breve{\mathcal{L}}_{13}= \breve{G}_{\tau \tau}C_h+\breve{G}_{\tau h}C_{\tau}+M_\tau C_hA_{\tau}+M_hC_{\tau}A_{\tau}, \nonumber \\
&& \breve{\mathcal{L}}_{14}=\breve{G}_{h h}C_{\tau}+\breve{G}_{\tau h}C_h+M_hC_hA_\tau, \nonumber 
\end{IEEEeqnarray}

\[
\breve{\mathcal{L}}_{15}=\breve{G}_{\tau \tau}C_{\tau}+M_{\tau}C_{\tau}A_\tau, \quad  \breve{\mathcal{L}}_{16}=\breve{G}_{hh}C_{h}
\]

\noindent 
and

\medskip

\noindent $\breve{G}_\tau$, $\breve{G}_h$, $\breve{G}_{\tau \tau}$, $\breve{G}_{\tau h}$ and $\breve{G}_{hh}$ are defined as follows
\begin{IEEEeqnarray}{rcl}
&& \breve{G}_\tau = G_\tau-NM_\tau-N_\tau M, \quad \breve{G}_h = G_h-NM_h-N_hM, \nonumber \\
&& \breve{G}_{\tau \tau} = G_{\tau \tau}-N_\tau M_\tau, \quad \breve{G}_{\tau h} = G_{\tau h}-N_\tau M_h-N_hM_\tau,\nonumber\\
&& \breve{G}_{hh} = G_{hh}-N_hM_h. \nonumber 
\end{IEEEeqnarray}	
%
Moreover,
\[
\breve{\mathcal L}
:=
\begin{pmatrix}
	\breve{\mathcal L}_1 & \cdots &
	\breve{\mathcal L}_{16}
\end{pmatrix}.
\]

To derive the existence conditions for Functional Observer Structure-C,
we impose the decoupling condition $\breve{\mathcal L}=\mathbf 0$ on the
error dynamics. This condition ensures that the estimation error is
independent of the plant state and input signals. The resulting
decoupling constraints can be expressed as the linear matrix equation
\[
\breve X \breve{\Theta} = \breve{\Upsilon},
\]
where $\breve X$ collects the observer matrices to be determined.
The matrices $\breve{\Theta}$ and $\breve{\Upsilon}$ are defined as
follows.
\[
\breve{\Theta}:=
\begin{pmatrix}
	\breve \Theta_{12} &\mathbf 0\\
	\breve{\Theta}_{21} &\breve{\Theta}_{22}
\end{pmatrix},
\qquad
\breve{\Upsilon}:=
\begin{pmatrix}
	\breve{\Upsilon}_1 &\mathbf 0
\end{pmatrix},
\]
where
\[
\breve \Theta_{12}=
\begin{pmatrix}
	H_0 &H_{\tau} &H_h &\mathbf0 &\mathbf0 &\mathbf0\\
	\mathbf0 &H_0 &\mathbf0 &H_{\tau} &\mathbf0 &H_h\\
	\mathbf0 &\mathbf0 &H_0 &\mathbf0 &H_h &H_{\tau}\\
	\mathbf0 &C_{\tau} &C_h &\mathbf0 &\mathbf0 &\mathbf0\\
	\mathbf0 &\mathbf0 &\mathbf0 &C_{\tau} &\mathbf0 &C_h\\
	\mathbf0 &\mathbf0 &\mathbf0 &\mathbf0 &C_h &C_{\tau}
\end{pmatrix},
\]

\[
\breve \Theta_{21}=
\begin{pmatrix}
	\mathbf0 &\mathbf0 &\mathbf0 &\mathbf0 &\mathbf0 &\mathbf0\\
	\mathbf0 &\mathbf0 &\mathbf0 &\mathbf0 &\mathbf0 &\mathbf0\\
	\mathbf0 &\mathbf0 &\mathbf0 &\mathbf0 &\mathbf0 &\mathbf0\\
	\mathbf0 &C_{\tau}A &C_hA& C_{\tau}A_{\tau} &\mathbf0 &C_hA_{\tau}\\
	\mathbf0 &\mathbf0 &\mathbf0 &C_{\tau}A &0 &C_hA\\
	\mathbf0 &\mathbf0 &\mathbf0 &\mathbf0 &C_hA &C_{\tau}A
\end{pmatrix},
\]

\[
\breve \Theta_{22}=
\begin{pmatrix}
	C_h &\mathbf0 &C_{\tau} &\mathbf0\\
	C_{\tau} &C_h &\mathbf0 &\mathbf0\\
	\mathbf0 &C_{\tau} &\mathbf0 &C_h\\
	\mathbf0 &\mathbf0 &\mathbf0 &\mathbf0\\
	C_hA_{\tau} &\mathbf0 &C_{\tau}A_{\tau} &\mathbf0\\
	C_{\tau}A_{\tau} &C_hA_{\tau} &\mathbf0 &\mathbf0
\end{pmatrix},
\]

and
\[
\breve{\Upsilon}_1=
\begin{pmatrix}
H_0A&
H_0A_\tau+H_\tau A&
H_hA&
H_\tau A_\tau&
\mathbf0&
H_hA_\tau
\end{pmatrix}.
\]

By Lemma~\ref{lem1}, the linear matrix equation
$\breve X \breve{\Theta}=\breve{\Upsilon}$ is solvable if and only if
\[
\rank\!\begin{pmatrix}
	\breve{\Upsilon}\\
	\breve{\Theta}
\end{pmatrix}
=
\rank(\breve{\Theta})
\]
admits the general solution
\[
\breve X
=
\breve{\Upsilon}\breve{\Theta}^{-}
+
Z\big(I_{3\breve s+9p}-\breve{\Theta}\breve{\Theta}^{-}\big),
\]
where \(Z\) is arbitrary.
Partitioning \(\breve X\) yields
\[
N=\breve N_1+Z\breve N_2,\quad
N_\tau=\breve N_{\tau_1}+Z\breve N_{\tau_2},\quad
N_h=\breve N_{h_1}+Z\breve N_{h_2},
\]
with
\[
\breve N_1
=
\breve{\Upsilon}\breve{\Theta}^{-}\breve{\mathcal I}_1,
\quad
\breve N_2
=
\big(I_{3\breve s+9p}-\breve{\Theta}\breve{\Theta}^{-}\big)\breve{\mathcal I}_1,
\]

\[
\breve N_{\tau_1}
=
\breve{\Upsilon}\breve{\Theta}^{-}\breve{\mathcal I}_2,
\quad
\breve N_{\tau_2}
=
\big(I_{3\breve s+9p}-\breve{\Theta}\breve{\Theta}^{-}\big)\breve{\mathcal I}_2,
\]

\[
\breve N_{h_1}
=
\breve{\Upsilon}\breve{\Theta}^{-}\breve{\mathcal I}_3,
\quad
\breve N_{h_2}
=
\big(I_{3\breve s+9p}-\breve{\Theta}\breve{\Theta}^{-}\big)\breve{\mathcal I}_3.
\]

\medskip

\noindent The following theorem summarizes the existence conditions for
Functional Observer Structure-C.

\medskip

\begin{thm}\label{thm:6}
	Suppose that $h>\tau$. Consider Functional Observer Structure-C
	(with internal delays) of order
	\[
	\breve s=\rank\!\begin{pmatrix}H_0 & H_\tau & H_h\end{pmatrix}.
	\]
	
	Then Structure-C achieves asymptotic estimation of
	\begin{IEEEeqnarray}{rcl}
		\breve h(t)
		&=&
		H_0 x(t)+H_\tau x(t-\tau)+H_h x(t-h)
		\nonumber\\
		&=&
		\begin{pmatrix}
			H_0 & H_\tau & H_h
		\end{pmatrix}\bar\xi(t),
	\end{IEEEeqnarray}
	with estimation-error dynamics decoupled from the plant state,
	if the following conditions hold.
	
	\begin{itemize}
		
		\item[(i)]
		The rank condition
		\medskip
		
		\[
		\rank
		\begin{pmatrix}
			\breve{\Upsilon}\\
			\breve{\Theta}
		\end{pmatrix}
		=
		\rank(\breve{\Theta})
		\]
		is satisfied.
		
		\item[(ii)]
		There exists a matrix \(Z\) such that the error system
		\begin{IEEEeqnarray}{rcl}
			\dot e(t)
			&=&
			(\breve N_1+Z\breve N_2)e(t)
			+
			(\breve N_{\tau_1}+Z\breve N_{\tau_2})e(t-\tau)
			\nonumber\\
			&&+
			(\breve N_{h_1}+Z\breve N_{h_2})e(t-h)
		\end{IEEEeqnarray}
		is asymptotically stable.
		
		A sufficient condition for this stability,
		according to Lemma~\ref{lem2},
		is that the LMI in Lemma~\ref{lem2} is feasible.
		
	\end{itemize}
	
	In this case the estimation error satisfies
	\[
	e(t)\to \mathbf 0
	\quad \text{as } t\to\infty.
	\]
	
\end{thm}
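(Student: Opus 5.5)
The plan follows the proof of Theorem~\ref{thm:FOB_structB}, now with three delay channels. The first step is to impose the decoupling condition $\breve{\mathcal L}=\mathbf 0$ on the error equation \eqref{s5ty2}. The sixteen coefficient blocks split into two groups.

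The input blocks $\breve{\mathcal L}_1,\dots,\breve{\mathcal L}_6$ are handled by direct choices, mirroring the remark after Theorem~\ref{thm:FOB_structB}:
\[
J=H_0B,\quad J_\tau=H_\tau B-MC_\tau B,\quad J_h=H_hB-MC_hB,
\]
\[
J_{\tau\tau}=-M_\tau C_\tau B,\quad J_{hh}=-M_hC_hB,\quad J_{\tau h}=-M_\tau C_hB-M_hC_\tau B.
\]
These choices are available for any values of $M,M_\tau,M_h$.

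The state blocks $\breve{\mathcal L}_7,\dots,\breve{\mathcal L}_{16}$ are linear in the unknowns. The unknowns are the three matrices $N,N_\tau,N_h$, the substituted gains $\bar G,\breve G_\tau,\breve G_h,\breve G_{\tau\tau},\breve G_{\tau h},\breve G_{hh}$, and $M,M_\tau,M_h$. The substitutions $\bar G=G-NM$, $\breve G_\tau=G_\tau-NM_\tau-N_\tau M$, and so on, are invertible once $N,N_\tau,N_h,M,M_\tau,M_h$ are known. Therefore no information is lost in passing to the barred variables.

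The next step is to collect these unknowns in
\[
\breve X=\begin{pmatrix}N & N_\tau & N_h & \bar G & \breve G_\tau & \breve G_h & M & M_\tau & M_h & \breve G_{\tau\tau} & \breve G_{\tau h} & \breve G_{hh}\end{pmatrix},
\]
ordered to match the block rows of $\breve\Theta$. I would then check, one block column at a time, that the equations $\breve{\mathcal L}_7=\dots=\breve{\mathcal L}_{16}=\mathbf 0$ are exactly $\breve X\breve\Theta=\breve\Upsilon$. Each column of $\breve\Theta$ corresponds to one of the delayed arguments $x(t),x(t-\tau),x(t-h),x(t-2\tau),x(t-2h),x(t-\tau-h),x(t-2\tau-h),x(t-\tau-2h),x(t-3\tau),x(t-3h)$, and the entries of $\breve\Upsilon$ are the terms involving only $H_\ast A$ and $H_\ast A_\tau$. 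This bookkeeping is where I expect the main obstacle. It amounts to verifying the block placement in $\breve\Theta_{12},\breve\Theta_{21},\breve\Theta_{22}$ and the zero columns of $\breve\Upsilon$, with a consistent ordering of the unknowns.

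Once the equation is in this form, Lemma~\ref{lem1} gives solvability if and only if item~(i) holds, together with the general solution
\[
\breve X=\breve\Upsilon\breve\Theta^-+Z\bigl(I-\breve\Theta\breve\Theta^-\bigr).
\]
Right-multiplying by the column selectors $\breve{\mathcal I}_1,\breve{\mathcal I}_2,\breve{\mathcal I}_3$ extracts
\[
N=\breve N_1+Z\breve N_2,\qquad N_\tau=\breve N_{\tau_1}+Z\breve N_{\tau_2},\qquad N_h=\breve N_{h_1}+Z\breve N_{h_2}.
\]
The remaining blocks of $\breve X$ give $M,M_\tau,M_h$ and the barred gains, from which $G,G_\tau,\dots,G_{hh}$ are recovered.

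With $\breve{\mathcal L}=\mathbf 0$, equation \eqref{s5ty2} reduces to the autonomous system
\[
\dot e=Ne(t)+N_\tau e(t-\tau)+N_he(t-h),
\]
which is decoupled from $x(\cdot)$ and $u(\cdot)$. Item~(ii) supplies a $Z$ making this system asymptotically stable, so $e(t)\to\mathbf 0$ for all initial functions and inputs.

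Finally, the LMI sufficiency claim follows by applying Lemma~\ref{lem2} with $N_1=\breve N_1$, $N_2=\breve N_2$, $N_{\tau_1}=\breve N_{\tau_1}$, and so on, taking $K=Z=M^{-1}G$ from the LMI solution. Since the theorem asserts only sufficiency, no converse argument is needed.
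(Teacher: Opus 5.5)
Your proposal is correct and follows essentially the same route as the paper: impose $\breve{\mathcal L}=\mathbf 0$, use Lemma~\ref{lem1} for solvability and to parameterize $N,N_\tau,N_h$ through $Z$, then invoke item~(ii)/Lemma~\ref{lem2}; you are somewhat more explicit than the paper in fixing the $J$'s to annihilate $\breve{\mathcal L}_1$--$\breve{\mathcal L}_6$. One bookkeeping slip: with the paper's $\breve\Theta$, block rows 7--9 are multiplied by $\breve G_{\tau\tau},\breve G_{\tau h},\breve G_{hh}$, and rows 10--12 (the ones containing $C_\ast A$, $C_\ast A_\tau$) by $M,M_\tau,M_h$, so the last six blocks of your $\breve X$ must be swapped, which the block-by-block check you planned would reveal; this leaves the rank condition and the extraction of $N,N_\tau,N_h$ unchanged but would otherwise mislabel the recovered gains.
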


\begin{proof}
	Under the decoupling condition
	\[
	\breve{\mathcal L}=\mathbf 0,
	\]
	the error equation \eqref{s5ty2} reduces to
	\[
	\dot e(t)=Ne(t)+N_\tau e(t-\tau)+N_h e(t-h),
	\]
	so that the estimation error dynamics are decoupled from the plant
	state \(x(\cdot)\) and the input \(u(\cdot)\).
	
	By construction, the condition \(\breve{\mathcal L}=\mathbf 0\) is
	equivalent to the linear matrix equation
	\[
	\breve X\breve{\Theta}=\breve{\Upsilon}.
	\]
	By Lemma~\ref{lem1}, this equation is solvable if and only if
	\[
	\rank\!\begin{pmatrix}
		\breve{\Upsilon}\\
		\breve{\Theta}
	\end{pmatrix}
	=
	\rank(\breve{\Theta}),
	\]
	which establishes item~(i).
	
	Whenever this rank condition holds, the general solution is
	\[
	\breve X
	=
	\breve{\Upsilon}\breve{\Theta}^{-}
	+
	Z\big(I_{3\breve s+9p}-\breve{\Theta}\breve{\Theta}^{-}\big),
	\]
	where \(Z\) is arbitrary. Partitioning \(\breve X\) then gives
	\[
	N=\breve N_1+Z\breve N_2,\quad
	N_\tau=\breve N_{\tau_1}+Z\breve N_{\tau_2},\quad
	N_h=\breve N_{h_1}+Z\breve N_{h_2}.
	\]
	Hence the decoupled estimation error dynamics takes the form
	\begin{IEEEeqnarray}{rcl}
		\dot e(t)
		&=&
		(\breve N_1+Z\breve N_2)e(t)
		+
		(\breve N_{\tau_1}+Z\breve N_{\tau_2})e(t-\tau)
		\nonumber\\
		&&+
		(\breve N_{h_1}+Z\breve N_{h_2})e(t-h). \nonumber 
	\end{IEEEeqnarray}
	
	Therefore, if there exists a matrix \(Z\) such that the above
	time-delay system is asymptotically stable, then Structure-C provides
	asymptotic estimation of \(\breve h(t)\).
	Moreover, by Lemma~\ref{lem2}, a sufficient condition for asymptotic
	stability of this delay system is feasibility of the LMI in
	Lemma~\ref{lem2}. This establishes item~(ii).
	
	Consequently, under conditions (i) and (ii), the estimation error
	satisfies
	\[
	e(t)\to \mathbf 0
	\qquad \text{as } t\to\infty.
	\]
	This completes the proof.
\end{proof}

\bigskip

We now present a numerical example to illustrate the proposed existence conditions and observer synthesis procedure described in this section.

\bigskip

\textit{Example 2:} This example illustrates the estimation of the desired functional
$z(t)$ using delayed measurements of the form
\[
y(t)=C_{\tau}x(t-\tau)+C_hx(t-h).
\]
Consider the following time-delay system
\[
\dot{x}(t)=2x(t)+x(t-\tau)+u(t).
\]

First, assume that the available measurement is
\[
y(t)=x(t-\tau),
\]
that is, the state variable $x(t)$ is measured with a delay $\tau$.
The objective is to estimate the instantaneous state $x(t)$.

We attempt to estimate $z(t)=x(t)$ using Functional Observer
Structure-B, which leads to the estimation-error time-delay system
\[
\dot{e}(t)=2e(t)+N_de(t-\tau),
\]
where the parameter $N_d$ must be chosen to ensure asymptotic
stability.

According to \cite{mori}, the exact stability condition for this
system is
\[
2+N_d<0, \qquad N_d\geq -\frac{1}{\tau}.
\]
These conditions imply that the delay $\tau$ must satisfy
\[
\tau<0.5\,\mathrm{s}.
\]
Therefore, when $\tau\geq0.5\,\mathrm{s}$, the estimation-error system
cannot be guaranteed to be asymptotically stable using
Functional Observer Structure-B.

Now, let us introduce the following delayed output measurement
\[y_e(t-\alpha)=x(t-h),\]
where $\alpha=h-\tau$. Thus, $y_e(t-\alpha)$ is obtained by delaying $y(t)$ by $\alpha$. We then consider the following new delayed measurement vector
\[
y_a(t)=\begin{pmatrix}
	y(t)\\y_e(t-\alpha)
\end{pmatrix}=C_{\tau}x(t-\tau)+C_hx(t-h),\]
where

\medskip

$C_{\tau}=\begin{pmatrix}
	1\\0\end{pmatrix}$ and $C_h=\begin{pmatrix}
	0\\1\end{pmatrix}$.

\medskip

Now consider using Functional Observer Structure-C to estimate
$z(t)=x(t)$. According to Theorem~\ref{thm:6}, both Conditions~(i)
and~(ii) are satisfied. Moreover, the LMI \eqref{eq:LMI} in
Lemma~\ref{lem2} is feasible for the delays $\tau=0.56\,\mathrm{s}$
and $h=1\,\mathrm{s}$. 

The resulting observer parameters are
\[
N=2, \qquad N_{\tau}=-2.3464, \qquad N_h=0.3441,
\]
and the corresponding estimation-error time-delay system
\[
\dot{e}(t)=2e(t)-2.3464e(t-0.56)+0.3441e(t-1)
\]
is asymptotically stable.

Consequently, Functional Observer Structure-C successfully estimates
the desired functional $z(t)=x(t)$ from the delayed output $y_a(t)$.
The observer parameters are given below.
\begin{IEEEeqnarray}{rcl}
	&&M= \begin{pmatrix}
		1.2549 &-0.1332\end{pmatrix}, M_{\tau}=\begin{pmatrix}
		-0.4183 &0.0222\end{pmatrix}, \nonumber \\ 
	&&M_h =\begin{pmatrix}
		0.0222 &0\end{pmatrix}, G=\begin{pmatrix}
		3.3464 &-0.3441\end{pmatrix}, \nonumber \\
	&&G_{\tau}=\begin{pmatrix}
		-4.1994 &0.3791\end{pmatrix}, G_h=\begin{pmatrix}
		0.4984 &-0.0458\end{pmatrix}, \nonumber \\
	&&G_{\tau \tau}=\begin{pmatrix}
		1.3998 &-0.0743\end{pmatrix}, G_{\tau h}=\begin{pmatrix}
		-0.2182 &0.0076\end{pmatrix} \nonumber\\
	&& G_{hh}=\begin{pmatrix}
		0.0076 &0\end{pmatrix}, J=1, J_{\tau}= -1.2549 \nonumber\\
	&&J_h=  0.1332, J_{\tau \tau}=0.4183, J_{\tau h}=-0.0444, J_{hh}=0. \nonumber 
\end{IEEEeqnarray}
	
\section{Conclusion}

This paper has investigated the problem of functional state estimation
for linear time-delay systems in which the delay associated with the
state evolution differs from the delay affecting the output measurements.
By explicitly distinguishing between the intrinsic state delay $\tau$
and the measurement delay $h$, a unified framework has been developed
for the estimation of functionals of the form $z(t)=Fx(t)$ under
mismatched delay conditions.

Three functional observer structures were introduced to accommodate
different delay configurations. For each structure, algebraic
existence conditions were established together with constructive
synthesis procedures. The proposed functional augmentation approach
enabled the derivation of verifiable rank-based conditions for observers
of various orders. In addition, the notion of generalized functionals,
defined over an augmented delayed state vector, provided increased
flexibility in satisfying observer existence conditions and in
systematic observer design.

For the cases $h=\tau$ and $h>\tau$, the resulting observer
design conditions were expressed in terms of solvable matrix
equations and, where appropriate, delay-dependent LMIs.
Numerical examples illustrated the applicability of the proposed
theory.
Future work may include extensions to systems with time-varying
delays, uncertain parameters, or nonlinear dynamics, as well as
integration with output-feedback stabilization frameworks for
networked control systems.

\section{Appendix: Stabilization of Time-Delay Error Systems}

\textit{Proof of Lemma \ref{lem2}:} Let 
\begin{align}
	\zeta_0^{\mathsf T}(t)
	&:=
	\left(
	e^{\mathsf T}(t)\quad
	\int_{t-\tau}^t e^{\mathsf T}(s)\,ds \quad
	\int_{t-h}^{t-\tau} e^{\mathsf T}(s)\,ds
	\right), \nonumber\\
	\xi^{\mathsf T}(t)
	&:=
	\Bigg(
	e^{\mathsf T}(t)\ \
	e^{\mathsf T}(t-\tau)\ \
	e^{\mathsf T}(t-h)\ \
	\frac{1}{\tau}\int_{t-\tau}^t e^{\mathsf T}(s)\,ds \nonumber\\
	&\qquad\qquad
	\frac{1}{h-\tau}\int_{t-h}^{t-\tau} e^{\mathsf T}(s)\,ds\ \
	\dot e^{\mathsf T}(t)
	\Bigg). \nonumber 
\end{align}

Consider the Lyapunov--Krasovskii functional
\begin{align}
	V(t,e_t) = &\zeta_0^{\mathsf T}(t)P\zeta_0(t) \nonumber \\
	&+\int_{t-\tau}^te^{\mathsf T}(s)Q_1e(s)ds
	+\int_{t-h}^{t-\tau}e^{\mathsf T}(s)Q_2e(s)ds \nonumber \\
	&+\int_{-\tau}^{0}\int_{\theta}^0\dot{e}^{\mathsf T}(t+u)R_1\dot{e}(t+u)dud\theta \nonumber \\
	& +\int_{-h}^{-\tau}\int_{\theta}^0\dot{e}^{\mathsf T}(t+u)R_2\dot{e}(t+u)du d\theta. \nonumber
\end{align}

It is easy to verify that
\begin{eqnarray}\label{eq:lambda_min}
	\lambda_{min}(P) ||e(t)||^2\leq V(t,e_t).\nonumber
\end{eqnarray}
Taking the derivative of $V$ in $t$, we have
\begin{align}
	\dot{V}(t,e_t)	=\ & 2\zeta_0^{\mathsf T}(t)P\dot{\zeta}_0(t)+ e^{\mathsf T}(t)Q_1e(t) \nonumber\\
	&-e^{\mathsf T}(t-\tau)(Q_1-Q_2)e(t-\tau)\nonumber\\
	&- e^{\mathsf T}(t-h)Q_2e(t-h)\nonumber\\
	&+\tau\dot{e}^{\mathsf T}(t)R_1\dot{e}(t)-\int_{t-\tau}^{t}\dot{e}^{\mathsf T}(s)R_1\dot{e}(s)ds
	\nonumber\\
	&+(h-\tau)\dot{e}^{\mathsf T}(t)R_2\dot{e}(t)-\int_{t-h}^{t-\tau}\dot{e}^{\mathsf T}(s)R_2\dot{e}(s)ds
	\nonumber\\
	=\ &\xi^{\mathsf T}(t)\Bigg(\mathrm{sym}(\Pi_1P\Pi_2^{\mathsf T}) +v_1Q_1v_1^{\mathsf T}\nonumber\\
	&-v_2(Q_1-Q_2)v_2^{\mathsf T}-v_3Q_2v_3^{\mathsf T}\nonumber\\
	&+v_6(\tau R_1+(h-\tau) R_2)v_6^{\mathsf T}\Bigg)\xi(t)\nonumber\\
	&-\int_{t-\tau}^{t}\dot{e}^{\mathsf T}(s)R_1\dot{e}(s)ds-\int_{t-h}^{t-\tau}\dot{e}^{\mathsf T}(s)R_2\dot{e}(s)ds. \label{eq:derivative-V}
\end{align}
By the Wirtinger-based integral inequality in \cite{Seuret2013}, we have
\begin{align}\label{eq:Wirtinger_ineq1}
	&-\int_{t-\tau}^{t}\dot{e}^{\mathsf T}(s)R_1\dot{e}(s)ds\nonumber\\
	&\quad \leq -\xi^{\mathsf T}(t)\frac{1}{\tau}\Big((v_1-v_2)R_1(v_1-v_2)^{\mathsf T}
	+3\rho_1R_1\rho_1^{\mathsf T}\Big)\xi(t)
\end{align}
and

\begin{align}\label{eq:Wirtinger_ineq2}
	&-\int_{t-h}^{t-\tau}\dot{e}^{\mathsf T}(s)R_2\dot{e}(s)ds\nonumber\\
	&\leq -\xi^{\mathsf T}(t)\frac{1}{h-\tau}\Big((v_2-v_3)R_2(v_2-v_3)^{\mathsf T}
	+3\rho_2R_2\rho_2^{\mathsf T}\Big)\xi(t).
\end{align}
On the other hand, by using the descriptor method \cite{Fridman2009}, we have
\begin{equation}\label{eq:descriptor1}
	0=
	2\xi^{\mathsf T}(t)(\lambda v_1+v_6)M(\mathcal N_1^{\mathsf T}+K\mathcal N_2^{\mathsf T})\xi(t).
\end{equation}
Putting $G=MK$. Then, \eqref{eq:descriptor1} follows that
\begin{eqnarray}\label{eq:descriptor2}
	0=2\xi^{\mathsf T}(t)(\lambda v_1+ v_6)(M\mathcal{N}_1^T+G \mathcal{N}_2^T)\xi(t).
\end{eqnarray}
From \eqref{eq:derivative-V} to \eqref{eq:descriptor2} and \eqref{eq:LMI}, we obtain 
\begin{eqnarray}\label{eq:upper_bound-deri_V}
	\dot{V}(t,v_t)\leq \xi^{\mathsf T}(t)\Theta\xi(t)<0.
\end{eqnarray}
By \eqref{eq:lambda_min} and \eqref{eq:upper_bound-deri_V}, we conclude that, with $K=M^{-1}G$, system \eqref{A4ty} is asymptotically stable. The proof of Lemma \ref{lem2} is completed. 

\bigskip

\textit{Proof of Lemma \ref{lem3}:} Let 

\begin{align}
	\zeta_0^{\mathsf T}(t)
	&:=
	\left(
	e^{\mathsf T}(t)\ \ 
	\int_{t-\tau}^t e^{\mathsf T}(s)ds
	\right)\nonumber\\
	\xi^{\mathsf T}(t)
	&:=
	\left(
	e^{\mathsf T}(t)\ \ 
	e^{\mathsf T}(t-\tau)\ \ 
	\frac{1}{\tau}\int_{t-\tau}^te^{\mathsf T}(s)ds\ \ 
	\dot e^{\mathsf T}(t)
	\right).
\end{align}

Consider the Lyapunov–Krasovskii functional

\begin{align}\label{eq:LKF1}
	V(t,e_t)
	=&\,
	\zeta_0^{\mathsf T}(t)\tilde{P}\zeta_0(t)
	+\int_{t-\tau}^te^{\mathsf T}(s)\tilde{Q}e(s)ds
	\nonumber \\
	&+\int_{-\tau}^{0}\int_{\theta}^0
	\dot{e}^{\mathsf T}(t+u)\tilde{R}\dot{e}(t+u)dud\theta . \nonumber 
\end{align}
Following the same derivation steps as in the proof of Lemma~\ref{lem2}, we also conclude the stability of system \eqref{A4} with $K=\tilde{M}^{-1}\tilde{G}$, and this completes the proof of Lemma \ref{lem3}.  
\newpage
\section*{Biography}
\begin{wrapfigure}{l}{1in}
	\includegraphics[width=1in,height=1.25in,clip,keepaspectratio]{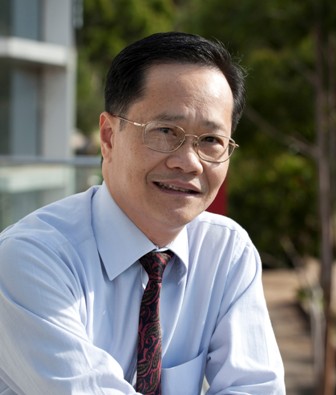}
\end{wrapfigure}
\noindent
\textbf{Hieu Trinh} received B.Eng. (Hons.), M.Eng.Sc., and Ph.D. degrees from the University of Melbourne, Australia, in 1990, 1992, and 1996, respectively. From 1997 to 2000, he was a Lecturer with the School of Engineering, James Cook University, Townsville, Australia. In 2001, he joined Deakin University, Geelong, VIC, Australia, where he is currently holds the position of Professor of Control Engineering, and Electrical Engineering. His research interests include theoretical control, observer design, and power system stability and control. 

\bigskip

\begin{wrapfigure}{l}{1in}
	\includegraphics[width=1in,height=1.25in,clip,keepaspectratio]{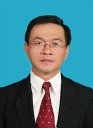}
\end{wrapfigure}
\noindent
\textbf{Phan T. Nam} received the B.Sc (Hons.), M.Sc. from Quy Nhon University, Diploma from ICTP, Italy, and Ph.D degree from Institute of Mathematics, Viet Nam, in 1997, 2002, 2003 and 2009, respectively. In 2003, He joined the  Department of Mathematics, Quy Nhon University (QNU), Viet Nam, where he currently holds the position of Associate Professor of Applied Mathematics. He is currently the Head of the Control Theory and Optimization 
Group at QNU.  His current research interests include 
stability analysis, state bounding, observer and controller design for (linear, positive, 2-D) time-delay systems.

\bigskip

\begin{wrapfigure}{l}{1in}
	\includegraphics[width=1in,height=1.25in,clip,keepaspectratio]{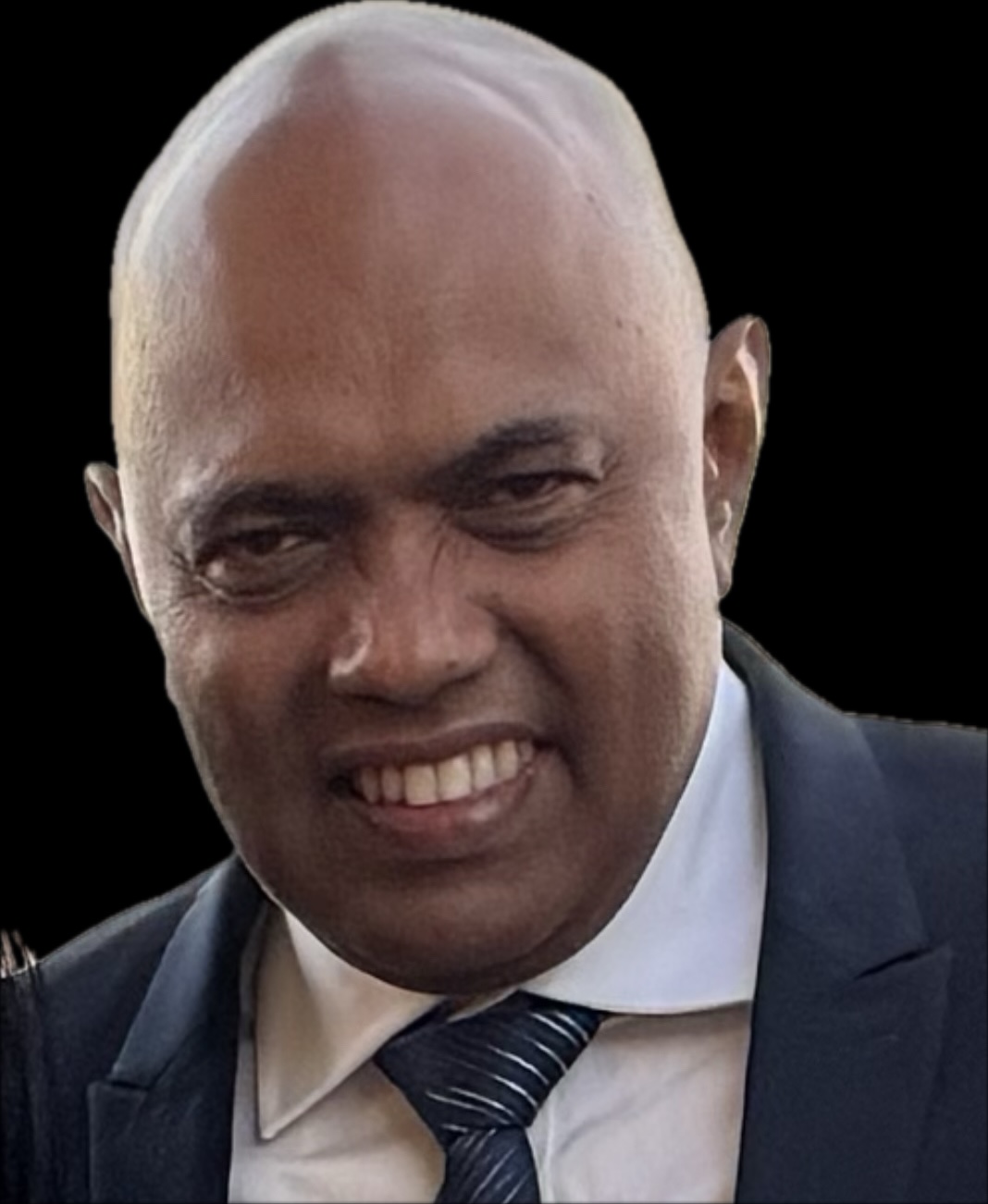}
\end{wrapfigure}
\noindent
\textbf{Tyrone Fernando} received his B.E. (Hons.) and Ph.D. degrees in Electrical Engineering from the University of Melbourne, Victoria, Australia, in 1990 and 1996, respectively. In 1996, he joined the Department of Electrical, Electronic and Computer Engineering, University of Western Australia (UWA), Crawley, WA, Australia, where he currently holds the position of Professor of Electrical Engineering. He previously served as Associate Head and Deputy Head of the department from 2008 to 2010.

Prof. Fernando is currently the Head of the Power and Clean Energy Research Group at UWA. He has provided professional consultancy to Western Power on the integration and management of distributed energy resources in the electric grid. In recognition of his professional contributions, he was named the Outstanding WA IEEE PES/PELS Engineer in 2018. His research interests include theoretical control, observer design, and power system stability and control. He has received multiple teaching awards from UWA in recognition of his contributions to control systems and power systems education.

\end{document}